\documentclass[a4paper,11pt]{article}

\usepackage{amsfonts}
\usepackage{amsmath}
\usepackage{amsthm}
\usepackage{array}
\usepackage[nobreak]{cite}
\usepackage[usenames,dvipsnames]{color}
\usepackage{complexity}
\usepackage{environ}
\usepackage[margin=1in]{geometry}
\usepackage{graphicx}
\usepackage[utf8]{inputenc}
\usepackage{CJKutf8}
\usepackage{longtable}
\usepackage{subcaption}
\usepackage{url}
\usepackage{verbatim}
\usepackage[normalem]{ulem}

\usepackage{hyperref}

\let\oldenumerate\enumerate
\renewcommand{\enumerate}{
  \oldenumerate
  \setlength{\itemsep}{1pt}
  \setlength{\parskip}{0pt}
  \setlength{\parsep}{0pt}
}
\let\olditemize\itemize
\renewcommand{\itemize}{
  \olditemize
  \setlength{\itemsep}{1pt}
  \setlength{\parskip}{0pt}
  \setlength{\parsep}{0pt}
}

\newcommand{\bbZ}{\mathbb{Z}}
\newcommand{\bbR}{\mathbb{R}}
\newcommand{\calI}{\mathcal{I}}
\newcommand{\calJ}{\mathcal{J}}
\newcommand{\calP}{\mathcal{P}}
\newcommand{\calQ}{\mathcal{Q}}
\newcommand{\calS}{\mathcal{S}}
\newcommand{\Tiling}{{\scshape Tiling}}
\newcommand{\FC}{{\scshape Flat Cover}}
\newcommand{\PE}{{\scshape 3-Precoloring Extension}}
\newcommand{\TS}{{\scshape Troublesome Sticker}}

\newcommand{\XTC}{{\scshape X3C}}

\newcommand{\Qpink}{{\color{Magenta}?\ }}
\newcommand{\half}{\frac{1}{2}}

\NewEnviron{problem}[2]{
  \begin{center}\fbox{\parbox{4.5in}{
        {\centering {\scshape #1} #2\par}
        \parskip=1ex
        \everypar{\hangindent=1em}
        \BODY
    }}
\end{center}}

\theoremstyle{definition}
\newtheorem{definition}{Definition}[section]

\newtheorem{lemma}[definition]{Lemma}
\newtheorem{theorem}[definition]{Theorem}
\newtheorem{corollary}[definition]{Corollary}
\newtheorem{remark}[definition]{Remark}

\newtheorem{question}{Question}

\title{Covering a Polyomino-Shaped Stain with Non-Overlapping Identical Stickers}
\author{Keigo Oka\thanks{Google. Work done in a personal capacity.}\\
  \href{mailto:ogiekako@gmail.com}{ogiekako@gmail.com}\and
  Naoki Inaba\thanks{Puzzle creator}\\
  \href{mailto:puzlab@gmail.com}{puzlab@gmail.com}\and
  Akira Iino\thanks{Nippon Hyoron Sha, Co., Ltd.}\\
\href{mailto:iino@nippyo.co.jp}{iino@nippyo.co.jp}}
\date{}

\newcommand{\datadir}{./}

\begin{document}

\maketitle

\begin{abstract}
  You find a stain on the wall and decide to cover it with non-overlapping stickers of a single identical shape (rotation and reflection are allowed). Is it possible to find a sticker shape that fails to cover the stain? In this paper, we consider this problem under polyomino constraints and complete the classification of always-coverable stain shapes (polyominoes). We provide proofs for the maximal always-coverable polyominoes and construct concrete counterexamples for the minimal not always-coverable ones, demonstrating that such cases exist even among hole-free polyominoes. This classification consequently yields an algorithm to determine the always-coverability of any given stain. We also show that the problem of determining whether a given sticker can cover a given stain is $\NP$-complete, even though exact cover is not demanded. This result extends to the 1D case where the connectivity requirement is removed. As an illustration of the problem complexity, for a specific hexomino (6-cell) stain, the smallest sticker found in our search that avoids covering it has, although not proven minimum, a bounding box of $325 \times 325$.
\end{abstract}

\section{Introduction}

Let $U$ be a set, $\calP, \calQ\subset 2^U$ be families of subsets of $U$, and let an equivalence relation $\sim$ be fixed on $\calP$. For a given $P\in \calP$ (sticker) and $Q\in\calQ$ (stain), if there are $P_1, \cdots, P_k\in\calP$ (where $k$ is a non-negative integer) such that each of them is equivalent to $P$, $P_i \cap P_j \neq \emptyset \Rightarrow i = j$, and $Q \subset P_1 \cup \cdots \cup P_k$, we say that $P$ \dotuline{flatly covers} $Q$. Hereafter, unless otherwise noted, ``\dotuline{cover}" will mean ``flatly cover". When $\calP,\sim,\calQ$ are fixed, the following problem,
which we name \TS{}, can be considered.

\begin{problem}{Troublesome Sticker}{$(\calP,\sim,\calQ)$}
  Input: $Q\in\calQ$

  Question: Is $Q$ \dotuline{always-coverable}? That is, is it true that for any $P\in\calP$, $P$ flatly covers $Q$?
\end{problem}

In this paper, (1) we provide a necessary and sufficient condition for a polyomino to be always-coverable. Surprisingly, no polyomino of size 7 or larger is always-coverable. For every polyomino that is always-coverable, we prove that it can be covered by any polyomino sticker, and for every polyomino that is not always-coverable, we provide a concrete sticker that cannot cover it. These were found through computational search and rigorous verification. (2) We prove $\NP$-completeness of \FC{} for polyominoes, which asks if a given stain can be covered by a given sticker, and $\NP$-completeness of a 1D variant without connectivity requirement.

There are various studies on the problem of covering a plane with one type of 2D shape (\Tiling). The discovery of shapes which only admit aperiodic tiling~\cite{smith2023aperiodic} is well known. Restricting the domain from arbitrary 2D shapes to polyominoes, the necessary and sufficient condition for one type of polyomino to cover a plane by translation only, and an efficient way to determine the condition, are known~\cite{Beauquier1991-zm, Gambini2007-lp}. The restriction on the number of types is essential because if we allow the use of 11 different types of polyominoes, the decision problem becomes undecidable~\cite{Ollinger2009-zh}. Even if the connectivity requirement is dropped, it is known that it is still decidable whether one type can cover a plane~\cite{bhattacharya2020periodicity}. If the region to be (exactly) covered is not a plane but a given polyomino, then the problem of tiling a given region with one type of tromino (whether of type L or I), allowing rotations, is known to be $\NP$-complete, $\sharp\P$-complete, and $\ComplexityFont{ASP}$-complete~\cite{moore2001hard, horiyama2017complexity}.

While \Tiling{} requires that the region be exactly covered, some studies discuss problems that allow the covering to exceed the region to be covered. One such problem is whether a set of arbitrary $k$ points can be covered by circular coins of the same size. This problem was popularized by Peter Winkler's column~\cite{Winkler2010-el} which discussed the problem for $k=10$ proposed by Naoki Inaba~\cite{inabaSuuri4}, and while it is known that there is an arrangement of $k=53$ points that cannot be covered~\cite{okayama2011covering}, and it is proven that there is no such arrangement for $k\le 12$~\cite{Winkler2010-el}, the problem is unsolved for intermediate values of $k$ to the best of our knowledge.

\begin{figure}[t]
  \centering
  \begin{minipage}[c]{0.48\textwidth}
    \centering
    \includegraphics[width=\linewidth]{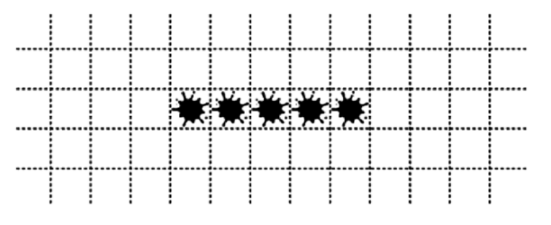}
    \caption{Pentomino~I ($1 \times 5$) shaped stain}
    \label{fig:ip}
  \end{minipage}
  \hfill
  \begin{minipage}[c]{0.48\textwidth}
    \centering
    \includegraphics[width=\linewidth]{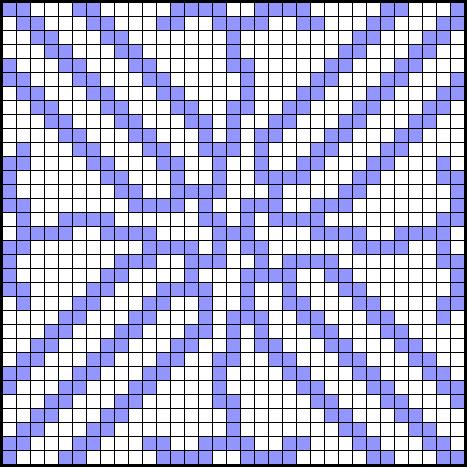}
    \caption{The smallest known polyomino ($33 \times 33$) that never covers pentomino~I}
    \label{fig:ipa}
  \end{minipage}
\end{figure}

Common to the problems we have seen so far is the requirement to cover a region with non-overlapping congruent shapes.
We refer to the target region as the \emph{stain} and the covering shape as the \emph{sticker}.
In the problem of covering points with coins, the shape of the stickers (disk of a given size) was already given, and the shape of the stain that cannot be covered by them was of our interest.
Our study is motivated by the idea of swapping the roles of the input: instead of fixing the sticker, we fix the stain. That is, we consider the problem of finding a sticker that cannot cover a given stain (\TS{}). We will see this in detail in Section~\ref{sec:polyomino_ts}, but we give here an example. If the stain is the pentomino~I (Figure~\ref{fig:ip}), is there a polyomino that cannot cover it (allowing rotations and reflections)? The answer is yes, and Figure~\ref{fig:ipa} shows the smallest such polyomino we know of~\cite{inabaSuuri7}. It is interesting that the complex shape emerges from a simple question. The main result we present in this paper is that for every polyomino-shaped stain, we have either shown a polyomino that cannot cover it, or given a proof that any polyomino can cover it. Polyominoes were chosen as our domain because they are a popular subject of study (popularized by the work of Solomon Golomb~\cite{golomb1996polyominoes} and Martin Gardner~\cite{gardner1960more}), and are easy to handle by computer programs thanks to their discrete nature. Although not treated in this paper, studying the problem on continuous objects will also be fun.

The last section presents open problems that arose during our research process.

\section{Preliminaries}\label{sec:preliminaries}

A non-empty finite set $P\subset\bbZ^2$ is called a \dotuline{polyomino} if and only if the interior of $P + \left[-\half,\half\right]^2$ is connected, where $P + \left[-\half,\half\right]^2 = \left\{a+(x,y) \mid a\in P, x,y\in \left[-\half,\half\right]\right\}$ is a subset of $\bbR^2$.

Let $P$ be a polyomino. A polyomino obtained by \dotuline{translating} $P$ is any polyomino that is equal to $P+d$ for some $d\in\bbZ^2$. A polyomino obtained by \dotuline{rotating} $P$ is any polyomino obtained by rotating $P$ around the origin by a multiple of 90 degrees. A polyomino obtained by \dotuline{reflecting} $P$ is any polyomino obtained by reflecting $P$ about the $x$ or $y$ axis. We consider the positive direction of the $x$-axis as the \dotuline{right direction} and the positive direction of the $y$-axis as the \dotuline{up direction}.

$P+\left[-\half,\half\right]^2$, which is a subset of $\bbR^2$, is commonly called \dotuline{the animal of $P$}. Since there is a one-to-one correspondence between a polyomino and its animal, it is common practice to draw its animal when illustrating a polyomino. If specific coordinates are omitted in the illustration, it means the origin can be anywhere. Sometimes $p\in\bbZ^2$ is illustrated by the animal of $\{p\}$ and is called a \dotuline{square}. \dotuline{The upper right point of a square $p$} is the upper rightmost point of the animal of $\{p\}$, and the same applies to upper left, lower left, and lower right. A square contained in the polyomino under consideration is sometimes called its \dotuline{cell}.

\dotuline{The bounding box of $P$} is the smallest rectangle with both sides parallel to the $x$ and $y$ axes that includes the animal of $P$. \dotuline{The top side of $P$} is the top side of the bounding box of $P$. Note that reflecting $P$ about its top side yields another polyomino, which has no element in common with $P$. The bottom, left, and right sides of $P$ are defined in the same way. \dotuline{The width (resp. height) of $P$} is the width (resp. height) of the bounding box of $P$. For $p\in P$, \dotuline{the row index of $p$ in $P$} is $\half$ plus the difference between the $y$ coordinate of the top side of $P$ and the $y$ coordinate of $p$ (the minimum possible row index is 1). \dotuline{The $i$-th row of $P$} is the set of elements in $P$ whose row index in $P$ is $i$.

Two polyominoes are defined to \dotuline{overlap} if their intersection is not empty. The operation of taking the union of two polyominoes that do not overlap is specifically denoted by $\oplus$.

In Section~\ref{sec:polyomino_ts} and Appendix~\ref{sec:coverable}, we use the term \dotuline{include} in a specific sense: for a set to \dotuline{include} a polyomino $Q$ means that the set contains a congruent copy of $Q$ (i.e., at least one of the polyominoes obtained by arbitrarily translating, rotating, or reflecting $Q$).

\section{Troublesome Sticker on Polyominoes}\label{sec:polyomino_ts}

Let $\calP$ be the set of all polyominoes. We write $\sim_E$ to denote the equivalence relation that equates two polyominoes that can become the same after applying arbitrary translations, rotations, and reflections. In this section we show the following theorem.

\begin{theorem}\label{thm:p_ts}
  There exists an algorithm that solves \TS~$(\calP,\sim_E,\calP)$ in $O(1)$ time.
\end{theorem}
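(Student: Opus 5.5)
The plan is to reduce the theorem to a finite classification, based on one monotonicity observation. If $Q' \subseteq Q$ up to congruence (i.e.\ $Q$ includes $Q'$) and $Q$ is always-coverable, then $Q'$ is always-coverable. Indeed, any flat cover of $Q$ by copies of $P$ also covers every subset of $Q$, and flat covering is invariant under translation, rotation and reflection of the stain. So always-coverability is a downward-closed property under the ``include'' order on polyominoes.

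The classification then consists of two finite lists. The first is a list $\calS_{\max}$ of maximal always-coverable polyominoes. The second is a list $\calS_{\min}$ of minimal not always-coverable polyominoes, meaning polyominoes that are not always-coverable but all of whose proper sub-polyominoes are. Given these, a polyomino $Q$ is always-coverable iff it is included in some member of $\calS_{\max}$.

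To make this an $O(1)$ algorithm I would first show that no polyomino with at least $7$ cells is always-coverable. Every polyomino with $\ge 7$ cells contains a $7$-cell sub-polyomino, obtained by repeatedly deleting a non-cut cell, for example a leaf of a spanning tree of the cell adjacency graph. Every $7$-cell polyomino includes some member of $\calS_{\min}$, which can be checked by a finite enumeration. Hence by monotonicity every polyomino with $\ge 7$ cells is not always-coverable. The algorithm therefore reads at most $7$ cells of the input, answers ``no'' if there are at least $7$, and otherwise looks the shape up in a fixed finite table of the polyominoes of size $\le 6$. This takes constant time.

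The substantive work is establishing the two lists. For each $Q \in \calS_{\min}$ I would exhibit an explicit sticker $P$, such as the $33\times 33$ one for pentomino~I or the $325\times 325$ one for a hexomino, found by computer search. I would then verify that $P$ does not cover $Q$. This verification is finite: any cover of $Q$ uses only copies of $P$ that meet $Q$, and there are finitely many such placements, so exhaustive search or a SAT check over placements meeting $Q$ certifies non-coverability.

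For each $Q \in \calS_{\max}$ I need a uniform proof that every polyomino $P$ covers $Q$. Here I would use structural arguments based on extreme cells. For instance, the cells of $P$ on its top row, together with the copy of $P$ reflected about its top side, give two disjoint copies whose union contains configurations around the top boundary. Combining such reflections and translations, plus a case analysis on the shape of the extreme rows and columns of $P$, I would show that each maximal shape (for example small rectangles like $2\times 3$ or short bars, if those turn out maximal) is always covered. Monotonicity then propagates coverability to all their sub-shapes. Finally, completeness of the classification follows by checking that every polyomino of size $\le 7$ lies either below some member of $\calS_{\max}$ or above some member of $\calS_{\min}$.

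I expect the main obstacle to be the coverability proofs for the maximal always-coverable shapes. They must hold for arbitrary, possibly huge and irregular stickers, so they demand a careful case analysis on the boundary structure of $P$. By contrast, the counterexamples, though large, are certified by finite computation.
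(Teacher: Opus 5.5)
Your plan is essentially the paper's proof. It has the same ingredients: monotonicity under inclusion; a finite check that every $7$-cell polyomino includes a member of $\calI$, so larger stains are rejected at once; computer-verified counterexample stickers for $\calI$; case-analytic coverability proofs for the maximal shapes in $\calJ$; and an exhaustive check that every polyomino with at most $6$ cells either includes a member of $\calI$ or is included in a member of $\calJ$, which gives a constant-size lookup. One correction to your speculative example: the $2\times 3$ rectangle is not always-coverable, since it includes pentomino~U, which lies in $\calI$; the maximal always-coverable shapes are in fact pentominoes Y, T, F and hexominoes 3, N, S.
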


For a polyomino $Q$, we say that $Q$ is \dotuline{always-coverable} if any polyomino can cover $Q$. Theorem~\ref{thm:p_ts} in other words means we can determine in $O(1)$ time whether a given polyomino is always-coverable or not. It is clear that the following remarks hold.

\begin{remark}\label{rem:ac}
  \leavevmode
  \begin{itemize}
    \item If $Q$ is always-coverable, any polyomino included in $Q$ is also always-coverable.
    \item If $Q$ is not always-coverable, any polyomino that includes $Q$ is also not always-coverable.
  \end{itemize}
\end{remark}

\begin{remark}\label{rem:dfs}
  Determining if a polyomino $P$ of size $n$ can cover a polyomino $Q$ of size $k$ is solvable in $O^*(n^k)$ time via depth-first search.
\end{remark}

\subsection{Not always-coverable Polyominoes}

From Remark~\ref{rem:dfs}, determining whether a specific sticker flatly covers a specific small stain is computationally tractable. Thus, verifying a potential counterexample (a sticker that fails to cover the stain) is straightforward using a computer. We computationally found and verified counterexamples for all polyominoes in the set $\calI$ (Figure~\ref{fig:IandJ}). The exact shapes of these counterexamples are detailed in Appendix~\ref{sec:uncoverable}, establishing the following theorem.

\begin{theorem}\label{thm:uncoverable}
  Polyominoes in $\calI$ are not always-coverable.
\end{theorem}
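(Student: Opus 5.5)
The plan is to prove Theorem~\ref{thm:uncoverable} one polyomino at a time. For each $Q\in\calI$ we exhibit an explicit sticker $P_Q$ (listed in Appendix~\ref{sec:uncoverable}) and certify that $P_Q$ does not flatly cover $Q$. By Remark~\ref{rem:ac}, we only need to treat the minimal members of $\calI$ up to $\sim_E$, since any larger stain in question includes one of them.

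For a fixed pair $(P,Q)$ the certification is a finite search. Translate so that a distinguished cell $q_0\in Q$ must be covered. Enumerate every placement of a copy of $P$ (one of the at most $8$ orientations, together with a translation) that contains $q_0$; there are at most $8|P|$ such placements. For each placement, remove the cells of $Q$ it covers, pick the next uncovered cell of $Q$, and enumerate the copies of $P$ that cover it and are disjoint from all copies already placed. Since $|Q|\le k$ is tiny, the recursion depth is at most $k$, which gives the $O^*(n^k)$ bound of Remark~\ref{rem:dfs}. Copies of $P$ that miss $Q$ entirely are never needed, because deleting them preserves disjointness and coverage. Hence a search that exhausts all branches without covering $Q$ proves that $P$ does not cover $Q$. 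To make this argument checkable, I would output the full search tree, or at least a compact certificate: for each branch, the cell of $Q$ that cannot be covered by any copy of $P$ disjoint from those already chosen.

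To find the stickers $P_Q$ in the first place, I would run a heuristic search rather than give a proof. The idea is to grow candidate polyominoes with many narrow protrusions and indentations, such as the comb-like $33\times33$ shape of Figure~\ref{fig:ipa}, so that any two copies touching $Q$ interfere with each other. Each candidate is tested with the verifier above, and the search could be guided by local mutations that increase the number of blocked branches.

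The main obstacle is the size of the counterexamples. Some stickers are huge (for one hexomino stain the bounding box is $325\times325$), so the placement count $8|P|$ at each level is large. Verification is still finite, but it must be done carefully: bitset overlap tests, symmetry reduction using the symmetries of $Q$, and pruning by bounding-box arguments (two copies can interact only if their bounding boxes meet). Finding such stickers at all is the genuinely hard, non-provable part, but the theorem only requires their existence, which the explicit appendix data and the exhaustive verification together establish.
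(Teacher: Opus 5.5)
Your proposal is correct and matches the paper's argument. The theorem rests on the explicit stickers in Appendix~\ref{sec:uncoverable}, each certified by an exhaustive depth-first search over placements covering the stain (Remark~\ref{rem:dfs}), with the stickers themselves found heuristically (the paper uses simulated annealing over symmetric tree-shaped candidates and then re-checks each candidate with a naive full search), and your reduction to minimal members of $\calI$ is harmless but unnecessary, since the appendix supplies a sticker for every member of $\calI$.
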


Every polyomino with 7 cells includes at least one polyomino in $\calI$. (This can be easily shown by a computational search). From this and Remark~\ref{rem:ac}, we obtain the following.

\begin{corollary}\label{col:ac7}
  Polyominoes with 7 or more cells are not always-coverable.
\end{corollary}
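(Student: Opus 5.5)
The corollary should follow from Theorem~\ref{thm:uncoverable} and Remark~\ref{rem:ac}, once we extend the 7-cell fact stated just before it to all sizes $n\ge 7$. Let $Q$ be a polyomino with $n\ge 7$ cells. The first step is to exhibit a sub-polyomino of $Q$ with exactly $7$ cells. We choose it so that it is connected in the sense of the definition in Section~\ref{sec:preliminaries}, i.e.\ so that the interior of its animal is connected, which for cells of $\bbZ^2$ is the same as connectivity under edge-adjacency.

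To get such a $7$-cell piece, consider the graph on the cells of $Q$ whose edges join edge-adjacent cells. By the definition of a polyomino this graph is connected, so it has a spanning tree. We then repeatedly delete a leaf of that tree. Removing a leaf keeps the remaining tree connected, so the remaining cell set is still a polyomino. After $n-7$ deletions we obtain a polyomino $Q'\subset Q$ with $|Q'|=7$. Since $Q'$ is literally a subset of $Q$ (the identity copy), $Q$ includes $Q'$ in the sense defined in Section~\ref{sec:preliminaries}.

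By the computationally verified fact stated before the corollary, $Q'$ includes some polyomino $R\in\calI$. Inclusion is transitive: a congruent copy of $R$ sits inside $Q'$, and $Q'$ sits inside $Q$. Hence $Q$ includes $R$. Theorem~\ref{thm:uncoverable} says $R$ is not always-coverable, and the second item of Remark~\ref{rem:ac} then gives that $Q$ is not always-coverable.

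The only step with real content is the one imported from the earlier text: that every $7$-cell polyomino includes a member of $\calI$. This is a finite check over the $108$ free heptominoes, done by comparing each one against all congruent placements of each member of $\calI$. I would simply cite it as stated. Everything else is the routine leaf-removal argument, whose only subtlety is making sure that edge-connectivity matches the animal-interior definition of polyomino, which it does.
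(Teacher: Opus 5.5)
Your proposal is correct and follows the paper's argument. The paper combines the computationally verified fact that every 7-cell polyomino includes a member of $\calI$ with Theorem~\ref{thm:uncoverable} and the second item of Remark~\ref{rem:ac}. The paper leaves implicit the step that a polyomino with $n\ge 7$ cells contains a 7-cell sub-polyomino; you make it explicit with the spanning-tree leaf-removal argument, which is sound.
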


\begin{figure}
  \centering
  \includegraphics[scale=0.45]{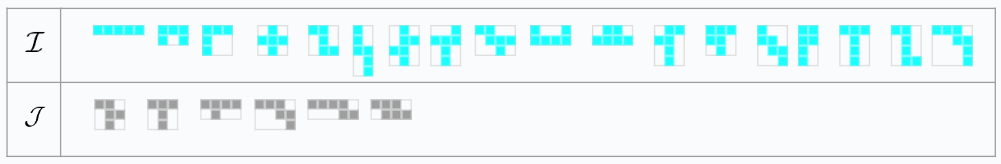}
  \captionof{figure}{Not always-coverable polyominoes $\calI$ and always-coverable polyominoes $\calJ$}
  \label{fig:IandJ}
\end{figure}

The algorithm used to search for counterexamples is based on the simulated annealing method~\cite{van1987simulated}. For computational efficiency, the shape of the stickers examined by the algorithm was restricted to rotationally and reflectionally symmetric trees (only the central part may be asymmetric so that the candidate itself does not include the stain). Neighborhoods for the annealing method include adding/removing a cell, random state flipping in a $2\times 2$ and $3\times 3$ region, and swapping the states of two random squares. The penalty is basically the number of ways to cover the stain with two stickers, but it is made higher if the stain can be covered with the cells near the side of the sticker or the outermost diagonal line that intersects the sticker. Also, for each way to cover the stain with two stickers, the smaller the number of ways to prevent the covering by adding a cell, the higher the penalty. The resulting behavior we see is that, over time, the average position of the cells contributing to the covering moves away from the side or diagonal line towards the center, and eventually a counterexample is found. We applied speed-up techniques such as delta-updating the penalty, and limiting the depth of interference of two stickers at the expense of the accuracy of the coverability check, and after finding candidates for counterexamples, these were fully checked using a naive algorithm. The program used is available for download at \url{https://drive.google.com/file/d/13V3XtuAb14-E6APNlQB5YRVExtIwddTd}.\footnote{SHA-256 hash: 7b9ce75eef1d973f6ab97227f7101e9984b1d4b7c5371b30b732a67c78377d18}

\subsection{Always-coverable Polyominoes}

For each polyomino in $\calJ$ in Figure~\ref{fig:IandJ}, we prove that it is always-coverable by contradiction. That is, we assume that there exists a polyomino (counterexample) which cannot cover the stain and derive a contradiction. In this section we give a proof that pentomino~Y (Figure~\ref{fig:pentominoY}) and pentomino~T (Figure~\ref{fig:pentominoT}) are always-coverable, and in Appendix~\ref{sec:coverable}, we give a proof for other polyominoes. From the following remark, we can assume that the counterexamples we assume in our proof are sufficiently large.

\begin{remark}
  Suppose that a polyomino $Q$ is not always-coverable. Then, for any positive integer $n$, there exists a polyomino whose height, width, and number of cells are all greater than or equal to $n$ that cannot cover $Q$.
\end{remark}

\begin{proof}
  Given a sticker $P_0$ that fails to cover $Q$, define $P'_i$ as the union of $P_i$ and the reflection of $P_i$ about its right side, and $P_{i+1}$ as the union of $P'_i$ and the reflection of $P'_i$ about its top side, then $P_{\lceil \log_2n \rceil}$ is a polyomino whose height, width, and number of cells are all $\ge n$ that cannot cover $Q$.
\end{proof}

\begin{figure}
  \centering
  \begin{minipage}{.45\textwidth}
    \centering
    \includegraphics[scale=0.3]{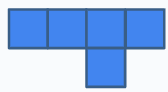}
    \captionof{figure}{\\Pentomino ``Y"}
    \label{fig:pentominoY}
  \end{minipage}
  \begin{minipage}{.45\textwidth}
    \centering
    \includegraphics[scale=0.3]{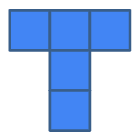}
    \captionof{figure}{\\Pentomino ``T"}
    \label{fig:pentominoT}
  \end{minipage}
\end{figure}

\begin{figure}
  \centering
  \begin{minipage}{.29\textwidth}
    \centering
    \includegraphics[width=.75\linewidth]{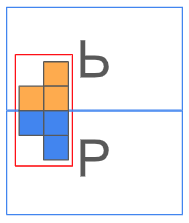}
    \captionof{figure}{Consecutive cells in the first row create a contradiction}
    \label{fig:proof_5y_1}
  \end{minipage}
  \hspace{0.04\textwidth}
  \begin{minipage}{.29\textwidth}
    \centering
    \includegraphics[width=0.95\linewidth]{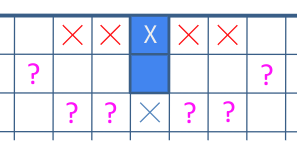}
    \captionof{figure}{A contradiction occurs if $X$ is connected to any \Qpink}
    \label{fig:proof_5y_2}
  \end{minipage}
  \hspace{0.04\textwidth}
  \begin{minipage}{.29\textwidth}
    \centering
    \includegraphics[width=0.72\linewidth]{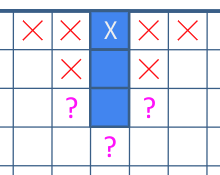}
    \captionof{figure}{A contradiction occurs if $P$ contains any \Qpink}
    \label{fig:proof_5y_3}
  \end{minipage}
\end{figure}

\begin{lemma}\label{lem:cover5Y}
  Pentomino~Y is always-coverable.
\end{lemma}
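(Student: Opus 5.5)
We argue by contradiction, as announced above. Suppose $P$ is a polyomino that cannot cover pentomino~Y. By the preceding remark we may assume $P$ is large, with height, width and number of cells at least, say, $10$. The idea is to use two copies of $P$ placed against each other along a side of the bounding box, one of them reflected. The cells near that side then form a configuration that must include a Y. Forbidding this forces strong local restrictions on the shape of $P$ near its top side, and these eventually become impossible.

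Throughout, we look at the first row of $P$, i.e.\ the cells touching the top side. Take $P$ together with its reflection $P'$ about the top side. These do not overlap, and near the top side their union is symmetric: every cell $c$ in the first row of $P$ is vertically adjacent to its mirror image in $P'$.

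\textbf{Step 1 (Figure~\ref{fig:proof_5y_1}).} We show that the first row of $P$ contains no two consecutive cells. Suppose cells $(x,y)$ and $(x+1,y)$ both lie in the first row. Then $P\oplus P'$ contains the $2\times 2$ block formed by these two cells and their mirror images. Since $P$ is large and connected, one of these four cells has a further neighbour in $P$ or $P'$ below or above, or the first row extends further sideways. In each case, a suitable translate of $P'$ (shifted horizontally by one, or rotated by $180^\circ$ instead of reflected) produces a non-overlapping pair whose union includes a Y. This is a short case check. So every cell $X$ of the first row is isolated within its row. It has exactly one neighbour in $P$, namely the cell directly below it.

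\textbf{Step 2 (Figure~\ref{fig:proof_5y_2}).} Fix such a top cell $X$ and its vertical stem. By looking at the second and third rows, we determine which cells around the stem may belong to $P$. We pair $P$ with a copy rotated by $90^\circ$ or reflected and shifted so that its own top cell nestles beside $X$. If $X$'s stem were connected to any of the marked squares \Qpink, the union would contain a Y; here the Y's long arm lies along the stem and its branch comes from the other copy's protruding cell. This shows that the stem under $X$ has length at least $3$ and carries no side branches in the first few rows.

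\textbf{Step 3 (Figure~\ref{fig:proof_5y_3}).} We combine this with the same analysis applied to the other three sides. A leftmost or rightmost cell gives an analogous isolated protrusion. Placing two copies so that protrusions from different sides interlock, we show that $P$ cannot contain any square in a second family of marked positions \Qpink. Finally, we check that the resulting constraints contradict the connectivity of $P$: $P$ must leave the long stem at some point and reach the rest of the polyomino, and every possible exit square has been forbidden.

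I expect Step 2 to be the main obstacle. There are many positions for the second sticker (rotations, reflections, offsets), and the real work is choosing, for each candidate square \Qpink, one placement that provably avoids overlap while creating a Y. The non-overlap is the delicate part. It relies only on extremality, i.e.\ the copies being separated by a bounding-box side, together with the facts already forced in Step 1. So I would organize the case analysis so that every placement is a reflection across a bounding-box side followed by a shift of at most one unit.
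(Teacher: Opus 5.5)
\textbf{There is a genuine gap in Steps 2 and 3.}

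Your Step 1 conclusion is right, and it is simpler than you make it. A maximal run of at least two first-row cells must have a cell $c$ below some member $b$. Then $c$, $b$, their mirror images and the run-neighbour of $b$ already form a Y inside $P_{\mathrm{rev}}\oplus P$.

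Steps 2 and 3 say what should come out but not how, and the placements you suggest there have no non-overlap guarantee. Take a copy rotated by $90^\circ$, or one whose protrusion comes from another side of $P$, and push it into the row of $X$. Whether it meets $P$ depends on how an unrelated column of $P$ lines up with the first row, and nothing you have derived controls that. The copies whose overlap with $P$ is governed by local facts about the top of $P$ are the translates $P_{\mathrm{rev}}+(s,-t)$ of the reflection about the top side (take $X=(0,0)$, $y$ upward).
\begin{itemize}
\item For $t=1$, such a copy meets $P$ iff two first-row cells are at horizontal distance $|s|$.
\item For $t=2$ and $s=\pm1$, it meets $P$ iff some first-row cell has a cell diagonally below it.
\end{itemize}

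The missing idea is to bootstrap inside this family, so that each forced local fact makes a new placement legal.
\begin{enumerate}
\item Step 1 makes $P_{\mathrm{rev}}+(\pm1,-1)$ legal.
\item These copies show that no two first-row cells are at distance $2$. Your proposal never establishes this fact. It makes $P_{\mathrm{rev}}+(\pm2,-1)$ legal.
\item These four copies force the square two below every first-row cell $X$ into $P$. For instance, if $P$ contains $(0,0),(0,-1),(-1,-1),(-2,-1),(-2,-2)$, the Y comes from $P_{\mathrm{rev}}+(-2,-1)$. That copy adds $(-2,0)$ and $(-2,1)$ on top of column $-2$, giving a four-cell column with branch $(-1,-1)$.
\item Now $P_{\mathrm{rev}}$ itself excludes the squares diagonally below $X$, which makes $P_{\mathrm{rev}}+(\pm1,-2)$ legal.
\item However the stem continues from $(0,-2)$, down or sideways, one of these two copies completes a Y.
\end{enumerate}

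So the argument closes at the top side alone, with no rotations and no other sides. It does, however, need shifts by two units. Your proposed restriction to shifts of at most one unit would rule out exactly these.
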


\begin{proof}
  Let $Q$ be a pentomino~Y and assume that $Q$ cannot be covered by a sufficiently large polyomino $P$. In particular, we assume the width and height of $P$ are both greater than 6. Let $P_{\mathrm{rev}}$ be the reflection of $P$ about its top side. There are no two consecutive cells (cells with a distance of 1) in the first row of $P$, because otherwise there should be two consecutive cells in the first row one of which is adjacent to a cell in the second row, and thus $P_{\mathrm{rev}}\oplus P$ includes $Q$ (Figure~\ref{fig:proof_5y_1}), which is a contradiction. Hence, the polyomino obtained by translating $P_{\mathrm{rev}}$ by one square to the left (resp. right) and then by one square down, which we call $P_{\mathrm{L}}$ (resp. $P_{\mathrm{R}}$), does not overlap $P$. This means that there are also no two cells in the first row of $P$ with a distance of 2 (otherwise, $P_{\mathrm{L}}\oplus P$ would include $Q$). Hence, the translation of $P_{\mathrm{L}}$ (resp. $P_{\mathrm{R}}$) by one square to the left (resp. right), which we call $P_{\mathrm{L2}}$ (resp. $P_{\mathrm{R2}}$), also does not overlap $P$. We now turn our attention to an arbitrary cell $X$ in the first row of $P$. We can say that there is a cell on the square two below $X$ (otherwise, there would be a cell at one of the positions marked by \Qpink in Figure~\ref{fig:proof_5y_2} that connects to $X$ without passing through any other square marked with \Qpink, but in any case, one of $P_{\mathrm{L2}}\oplus P, P_{\mathrm{L}}\oplus P, P_{\mathrm{R}}\oplus P, P_{\mathrm{R2}}\oplus P$ would include $Q$, which is a contradiction). Therefore, $P$ does not contain the squares lower left and lower right of $X$ (otherwise, $P_{\mathrm{rev}}\oplus P$ would include $Q$). Hence, the translation of $P_{\mathrm{L}}$ (resp. $P_{\mathrm{R}}$) by one square down, which we call $P_{\mathrm{LD}}$ (resp. $P_{\mathrm{LR}}$), also does not overlap $P$. There is a cell in one of the positions marked by \Qpink in Figure~\ref{fig:proof_5y_3}, but in any case it leads to a contradiction because $P_{\mathrm{LD}}\oplus P$ or $P_{\mathrm{RD}}\oplus P$ would include $Q$.
\end{proof}

\begin{lemma}\label{lem:cover5T}
  Pentomino~T (Figure~\ref{fig:pentominoT}) is always-coverable.
\end{lemma}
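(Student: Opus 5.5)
We argue by contradiction, in the style of Lemma~\ref{lem:cover5Y}. Suppose the pentomino~T, which we call $Q$, cannot be covered by some polyomino $P$. By the preceding remark we may take $P$ to have width and height greater than $6$. Let $P_{\mathrm{rev}}$ be the reflection of $P$ about its top side, and write $x$ for the cells of the first row of $P$. The T consists of a bar of three cells and a stem of two cells hanging from the middle of the bar. So whenever $P_{\mathrm{rev}}\oplus P$ (or a translate of $P_{\mathrm{rev}}$ sitting above $P$ without overlap) contains three consecutive cells in one row together with two cells stacked vertically below or above the middle one, we have a contradiction.

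\textbf{Step 1.} Show that the first row of $P$ has no cell $X$ with a cell directly below it. Suppose $X$ is in row~1 and the square below $X$ is a cell. Then the mirror image of $X$ in $P_{\mathrm{rev}}$ lies directly above $X$, giving a vertical domino across the seam. If $X$ also has a horizontal neighbour in row~1, then row~1 together with its mirrored row and the cells below it should already give a T. We have to be careful here, because the T needs a bar of length~3 perpendicular to the stem. The bar will come from three consecutive cells in row~2 or row~1, and the stem from the cells above or below.

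\textbf{Step 2.} Use the reflection together with horizontal shifts by $1$ and $2$, that is $P_{\mathrm{L}},P_{\mathrm{R}},P_{\mathrm{L2}},P_{\mathrm{R2}}$ as before, to progressively constrain the first few rows. The order is the same as in Lemma~\ref{lem:cover5Y}:
\begin{enumerate}
\item Show that first-row cells are isolated horizontally, since two consecutive cells plus a reflected neighbour create a T through a suitable vertical stem.
\item Conclude that the one-left/one-down shifted copies do not overlap $P$.
\item Rule out first-row cells at distance~$2$.
\item Examine the neighbourhood of an arbitrary first-row cell $X$.
\end{enumerate}
Since $X$ is isolated in row~1, it connects to the rest of $P$ only through the square below it, so $X$ together with the cell below forms the start of a vertical stem. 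If the square two below $X$ is a cell, the shifted reflection $P_{\mathrm{L}}$ or $P_{\mathrm{R}}$ supplies bar cells adjacent to the top of this stem, giving a T. Otherwise the path from $X$ must turn sideways in row~2, and a horizontal pair in row~2 combined with the reflected copy shifted down by one yields a T.

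\textbf{Main obstacle.} The bulk of the work is the case analysis near $X$: deciding which shifted copy of $P_{\mathrm{rev}}$ is guaranteed not to overlap $P$ at each stage, and checking that each possible cell (the analogue of the \Qpink{} squares in Figures~\ref{fig:proof_5y_2} and~\ref{fig:proof_5y_3}) creates a T with one of these copies. If the pure top-side reflection turns out to be insufficient, I would fall back on also using the $180^\circ$ rotation of $P$ placed at a corner, for example the top-left corner cell. In that configuration the stem of the T can be formed diagonally across the junction, which is the other standard tool for cases of this kind.
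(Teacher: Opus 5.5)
Your proposal is an outline rather than a proof, and its concrete mechanism does not work for the T. The load-bearing step is the one for an isolated first-row cell $X$: ``if the square two below $X$ is a cell, the shifted reflection $P_{\mathrm{L}}$ or $P_{\mathrm{R}}$ supplies bar cells adjacent to the top of this stem, giving a T.'' This is false.

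Take $P$ whose first three rows are each a single cell in the same column, so near the top $P$ is a spike $X=(0,0),(0,-1),(0,-2)$. This $P$ passes your earlier steps, since first-row cells are isolated and none are at distance $2$. Near the seam, $P_{\mathrm{rev}}$ is just $(0,1),(0,2),(0,3)$, $P_{\mathrm{L}}$ is $(-1,0),(-1,1),(-1,2)$, and $P_{\mathrm{R}}$ is $(1,0),(1,1),(1,2)$. The copies $P_{\mathrm{L2}}$ and $P_{\mathrm{R2}}$ do not touch $P$ at all. In each of $P_{\mathrm{rev}}\oplus P$, $P_{\mathrm{L}}\oplus P$ and $P_{\mathrm{R}}\oplus P$, the only adjacency between the two copies is at $X$. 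Any T using both copies must contain that adjacency, so it lies in a region where the union is a six-cell path. A path has no cell of degree $3$, so none of these unions includes a T.

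Each shifted copy supplies only one side cell of $X$, but the bar needs both. You cannot simply use $P_{\mathrm{L}}$ and $P_{\mathrm{R}}$ together: $P_{\mathrm{R}}=P_{\mathrm{L}}+(2,0)$, so they overlap whenever some row of $P$ has two cells at distance $2$.

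The other steps have related problems.
\begin{itemize}
\item Step~2.1 has the same defect. Two consecutive first-row cells, their mirrors, and a cell below with its mirror form a vertical line of four with side bumps of length one. That is why Y appears in the proof of Lemma~\ref{lem:cover5Y}. It is not a T, which needs a bar of three with a two-cell stem from its middle.
\item Step~1, for connected $P$ of height greater than $1$, already is the desired contradiction, stated without argument.
\item The $180^\circ$ fallback is too unspecified to assess.
\end{itemize}

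The missing idea is a way to force a genuinely perpendicular junction. The paper does this globally rather than near the first row. It surrounds $P$ by its reflections $P_\mathrm{U},P_\mathrm{D},P_\mathrm{L},P_\mathrm{R}$ about all four sides; these are the side reflections, not your shifted copies. The five polyominoes are pairwise disjoint, so their union includes no T. It then takes a shortest path $D_1$ crossing $P_\mathrm{U}\oplus P\oplus P_\mathrm{D}$ from top to bottom, and a shortest path $D_2$ crossing $P_\mathrm{L}\oplus P\oplus P_\mathrm{R}$ from left to right. These paths must meet inside $P$. At the first cell $X$ of $D_1$ adjacent to some cell $Y$ of $D_2$, two facts drive a short case analysis to a contradiction:
\begin{itemize}
\item shortestness, so a path cell has exactly two path neighbours and there are no shortcuts;
\item the fact that $D_1\cup D_2$ includes no T.
\end{itemize}
This crossing of a vertical and a horizontal traversal, using copies along both axes at once, is what your one-directional top-reflection argument lacks.
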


\begin{figure}
  \centering
  \begin{minipage}{.40\textwidth}
    \centering
    \includegraphics[width=.9\linewidth]{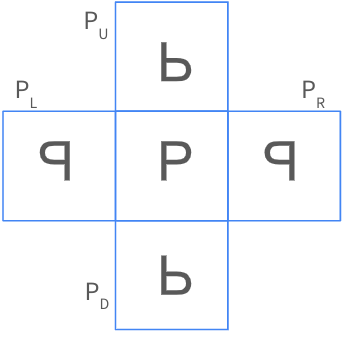}
    \captionof{figure}{\\Definitions of $P_\mathrm{U}, P_\mathrm{D}, P_\mathrm{L}, P_\mathrm{R}$}
    \label{fig:5Tsetup1}
  \end{minipage}
  \begin{minipage}{.40\textwidth}
    \centering
    \includegraphics[width=.9\linewidth]{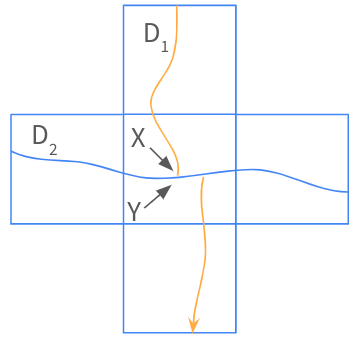}
    \captionof{figure}{\\Definitions of $D_1, D_2, X, Y$}
    \label{fig:5Tsetup2}
  \end{minipage}
\end{figure}

\begin{figure}
  \centering
  \begin{minipage}{.25\textwidth}
    \centering
    \includegraphics[width=.9\linewidth]{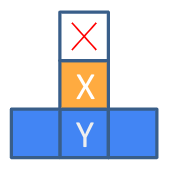}
    \captionof{figure}{}
    \label{fig:5Tcontra1}
  \end{minipage}
  \begin{minipage}{.25\textwidth}
    \centering
    \includegraphics[width=.9\linewidth]{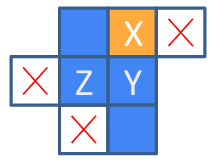}
    \captionof{figure}{}
    \label{fig:5Tcontra2}
  \end{minipage}
  \begin{minipage}{.25\textwidth}
    \centering
    \includegraphics[width=.9\linewidth]{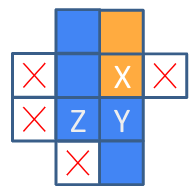}
    \captionof{figure}{}
    \label{fig:5Tcontra3}
  \end{minipage}
\end{figure}

\begin{proof}
  Let $Q$ be a pentomino~T and assume that $Q$ cannot be covered by a sufficiently large polyomino $P$. Consider the polyomino $P_\mathrm{U}\oplus P_\mathrm{D}\oplus P_\mathrm{L}\oplus P_\mathrm{R}\oplus P$ where $P_\mathrm{U}, P_\mathrm{D}, P_\mathrm{L}, P_\mathrm{R}$ are the polyominoes obtained by reflecting $P$ about its top, bottom, left, and right side respectively (Figure~\ref{fig:5Tsetup1}). From the assumption this does not include $Q$. Consider two cells with a distance of 1 as adjacent, and look at the oriented shortest path $D_1$ from any top-side cell to a bottom-side cell in $P_\mathrm{U}\oplus P_\mathrm{D}\oplus P$ and the shortest path $D_2$ from any left-side cell to a right-side cell in $P_\mathrm{L}\oplus P_\mathrm{R}\oplus P$. Note that $D_1$ and $D_2$ must have an intersection in $P$. Since $D_1\cup D_2\subset P_\mathrm{U}\oplus P_\mathrm{D}\oplus P_\mathrm{L}\oplus P_\mathrm{R}\oplus P$, $D_1\cup D_2$ does not include $Q$. Let $X$ be the first cell in $D_1$ that is adjacent to any cell in $D_2$, and $Y$ be any cell in $D_2$ adjacent to $X$, and without loss of generality, assume that $X$ is the cell above $Y$ (Figure~\ref{fig:5Tsetup2}). From the way $X$ is taken, $X\notin D_2$. Since $Y$ is a cell on the shortest path $D_2$, $D_2$ contains exactly two cells adjacent to $Y$. It is not possible for $D_2$ to contain both the left and right squares of $Y$ (otherwise since $D_1\cup D_2$ does not include $Q$, $D_1$ would not contain the square above $X$, and then the cell before $X$ in $D_1$ must be adjacent to a cell in $D_2$, which is a contradiction (Figure~\ref{fig:5Tcontra1})). Hence, $D_2$ contains the square below $Y$. It also contains either the square to the left or to the right of $Y$. We assume that it contains the left of $Y$ and call it $Z$ (if it contains the right of $Y$, the same argument holds). From its shortestness, $D_2$ does not contain the square under $Z$. In order not to cover $Q$, it also does not contain the left of $Z$ and thus contains the square above $Z$. In order not to cover $Q$, the right of $X$ is not contained in $D_1$ (Figure~\ref{fig:5Tcontra2}). From the way $X$ is taken, $D_1$ contains the square above $X$ as the cell before $X$ in $D_1$. In order not to cover $Q$, $D_2$ does not contain the square upper left of $Z$, and thus contains the square two above $Z$. Then the cell above $X$ becomes adjacent to a cell in $D_2$, which is a contradiction (Figure~\ref{fig:5Tcontra3}).
\end{proof}

\begin{theorem}\label{thm:coverable}
  Polyominoes in $\calJ$ are always-coverable.
\end{theorem}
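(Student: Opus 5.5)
By Remark~\ref{rem:ac}, it suffices to prove always-coverability only for the maximal elements of $\calJ$, i.e.\ those polyominoes in $\calJ$ not included in any other member of $\calJ$. Every smaller member is included in one of these maximal ones and inherits the property. So the first step is to list the maximal members of $\calJ$ from Figure~\ref{fig:IandJ}. Pentomino~Y and pentomino~T are already handled by Lemmas~\ref{lem:cover5Y} and~\ref{lem:cover5T}. The remaining maximal ones are presumably a few hexominoes and pentominoes not contained in any hexomino of $\calJ$.

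For each remaining maximal $Q$, I would argue by contradiction in the same style. Assume a sufficiently large sticker $P$ (width and height $>6$, by the remark on enlarging counterexamples) does not cover $Q$. Then every arrangement of two or more non-overlapping copies of $P$ avoids including $Q$.

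I would use two families of test arrangements.
\begin{itemize}
\item \textbf{Boundary-row arguments} (as for pentomino~Y). Place $P_{\mathrm{rev}}$, the reflection of $P$ about its top side, above $P$, and slide it by small horizontal and vertical offsets. Each placement that is shown to be non-overlapping forbids certain local patterns in the first few rows of $P$. These forbidden patterns in turn certify that further shifted placements are also non-overlapping. The process bootstraps until every possible continuation of a first-row cell $X$ toward the rest of $P$ produces a copy of $Q$.
\item \textbf{Crossing-path arguments} (as for pentomino~T). Surround $P$ by its four reflections $P_\mathrm{U},P_\mathrm{D},P_\mathrm{L},P_\mathrm{R}$. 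Take a shortest top-to-bottom path $D_1$ and a shortest left-to-right path $D_2$, which must meet inside $P$. Then analyse the local configuration at the first contact cell, using shortestness (no shortcuts, each internal cell has exactly two path neighbours) together with the absence of $Q$.
\end{itemize}
Stains with a long straight part, or a part that can be assembled across the seam $P_{\mathrm{rev}}\oplus P$, suit the first method. Stains with branching, T-like structure suit the second.

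Each case analysis is finite and local: only a bounded window around $X$ (respectively around the intersection of $D_1$ and $D_2$) matters. Where a hand argument becomes unwieldy, I would mechanise it. I would enumerate all $0/1$ patterns of $P$ in a bounded window near the top side, subject to the constraints induced by the chosen shifted reflections. I would then check that every pattern in which the window's cells connect to the interior yields a copy of $Q$ in some union $P'\oplus P$. This check is itself a finite verification. The main obstacle I expect is choosing, for each stain, the right sequence of auxiliary placements so that the bootstrapping closes. For hexominoes the forbidden configurations are sparser, so longer chains of shifted reflections are needed, possibly also combining diagonal offsets or two reflections at once. I would first try the boundary-row method with offsets up to $2$ in each direction, and fall back to the crossing-path method when that fails.
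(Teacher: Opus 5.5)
\textbf{There is a genuine gap: beyond Y and T, the proposal is a search plan rather than a proof.}

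Your reduction via Remark~\ref{rem:ac} is fine, as is the appeal to Lemmas~\ref{lem:cover5Y} and~\ref{lem:cover5T}. Everything after that leaves open exactly the part that constitutes the theorem. In the paper, $\calJ$ consists of pentominoes~Y, T, F and hexominoes~3, N, S. The remaining content is four stain-specific contradiction arguments (Lemmas~\ref{lem:cover5F}--\ref{lem:cover6S}).

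You do not identify these stains. For none of them do you exhibit auxiliary placements together with a case analysis that closes. ``I would first try \dots{} and fall back \dots'' and ``I would mechanise it'' do not show that any bootstrapping terminates, and that is the whole difficulty. Even a finite computer check would need a fixed, justified list of placements. Each placement's disjointness from $P$ must be guaranteed a priori by a separating supporting line, or derived as a universal statement about the boundary of $P$. You supply no such list.

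\textbf{Your two families of placements also miss the ideas the paper needs.}

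For F and hexomino~3, the paper reflects $P$ about its top side, shifts it one column left (resp.\ right), and then slides it \emph{down} to first contact. Take $j$ minimal such that $L_{j+1}\cap P\neq\emptyset$, with $j\le k$ by symmetry. Then all of $L_0,\dots,L_j,R_0,\dots,R_j$ are disjoint from $P$. The case analysis is local around the contact pair $X'\in L_j$, $Y\in P$, and splits by comparing the row indices of $X$ and $Y$; for F, the equal case forces $P$ to have height $1$. The contact depth $j$ is unbounded. So this is not a bounded-window argument near the top side with offsets at most $2$.

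For N and S, the decisive placements lie outside both of your families:
\begin{itemize}
\item $P$ reflected about the uppermost slope-$1$ line meeting $P$, then shifted by one square;
\item $P$ rotated by $180^\circ$ about the lower-left or upper-right corner of an extreme cell $X$ on that line.
\end{itemize}
These are disjoint from $P$ because the corner lies on a supporting line. The upper-right rotation is available only after proving that the diagonal neighbour of $X$ along the line is absent. That is property $\star$ for N, and ``no cell above $Y$'' for S.

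Your axis-parallel bootstrapping gives no indication of producing these skew stains. The crossing-path argument is tailored in the paper to the branching of T and does not help here either. As it stands, the proposal does not prove the theorem.
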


\begin{proof}
  From Lemma~\ref{lem:cover5Y}~and~\ref{lem:cover5T}, pentomino~Y~and~T are always-coverable. Any other polyominoes in $\calJ$ are proven to be always-coverable in Appendix~\ref{sec:coverable}.
\end{proof}

\subsection{Proof of Theorem~\ref{thm:p_ts}}

\begin{proof}
  Every polyomino either includes a polyomino in $\calI$ or is included in a polyomino in $\calJ$. For polyominoes with 7 or more cells, this has already been shown by Corollary~\ref{col:ac7}. For polyominoes with fewer than 7 cells, this is easily confirmed by a computerized exhaustive search.

  Let us give an algorithm that determines whether a given polyomino $Q$ is always-coverable in $O(1)$ time. From Corollary~\ref{col:ac7}, if the number of cells in $Q$ is greater than or equal to 7, we can immediately answer that it is not always-coverable, so we can assume that the number of cells is less than 7. $Q$ either includes a polyomino in $\calI$ or is included in a polyomino in $\calJ$, and we can determine which is the case in $O(1)$~time. If the former is the case, $Q$ is not always-coverable; if the latter is the case, $Q$ is always-coverable.
\end{proof}

It is also possible, in linear time in the input size, to output a polyomino that cannot cover a given polyomino $Q$ if it is not always-coverable. We can find a polyomino in $\calI$ that $Q$ includes and output the known polyomino that cannot cover it. Since the size of $\calI$ is $O(1)$, this algorithm works in linear time.

Finally, it is notable that all the counterexamples we have given in Appendix~\ref{sec:uncoverable} have no holes, i.e., all the animals induced from them are simply connected. This means that even if we limit the domain to simply connected polyominoes, we still have the same result as Theorem~\ref{thm:p_ts} via the same proof:

\begin{theorem}
  Let $\calP'$ be the set of all simply connected polyominoes, and $\sim_E$ be the equivalence relation under translation, rotation, and reflection.
  \TS$(\calP', \sim_E, \calP')$ is solvable in $O(1)$ time, and if the stain is not always-coverable, a sticker that cannot cover it can be found in linear time.
\end{theorem}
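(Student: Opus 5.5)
The plan is to rerun the proof of Theorem~\ref{thm:p_ts} with the sticker domain restricted to $\calP'$ and check that each step survives. The one change in the problem is this: a stain $Q\in\calP'$ is now always-coverable if \emph{every simply connected} polyomino covers it. The new notion is weaker than the old one, so two things must be confirmed. First, stains that were always-coverable over $\calP$ remain so over $\calP'$. Second, stains that were not always-coverable still have a counterexample inside $\calP'$.

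For the first direction, take a polyomino in $\calJ$. Theorem~\ref{thm:coverable} says every polyomino covers it, and in particular every simply connected one does. Every polyomino in $\calJ$ is small, so it is hole-free, and together with anything it includes it lies in $\calP'$. Remark~\ref{rem:ac} also transfers verbatim to the restricted setting, since its proof never uses the sticker domain.

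For the second direction, we appeal to the observation that each counterexample sticker listed in Appendix~\ref{sec:uncoverable} for a member of $\calI$ is simply connected. This is a finite check on explicit shapes: compute the connected components of the complement of the animal within an enlarged bounding box and confirm there is exactly one. It is the only genuinely new ingredient, and I expect it to be the main point needing care. If some listed sticker had a hole, we would need a new search restricted to hole-free trees. The annealing search already used symmetric trees, which are automatically hole-free, so the check should pass. With that, each member of $\calI$ is not always-coverable with respect to $\calP'$. Remark~\ref{rem:ac}(ii) then propagates non-coverability to every simply connected polyomino including a member of $\calI$, and Corollary~\ref{col:ac7} carries over unchanged, because the 7-cell inclusion fact is purely about stains.

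Finally, we reuse the dichotomy from the proof of Theorem~\ref{thm:p_ts}: every polyomino, and hence every simply connected one, either includes a member of $\calI$ or is included in a member of $\calJ$. This gives the same $O(1)$ algorithm. On input $Q$, answer ``no'' if $|Q|\ge 7$, and otherwise look $Q$ up in the finite classification. For the linear-time output, locate a member of $\calI$ included in $Q$, which is a constant number of pattern checks, each linear in the size of $Q$. Then print its stored counterexample, which lies in $\calP'$ by the check above.
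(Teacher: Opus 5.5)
Your proposal is correct and follows the paper's own argument. The paper justifies this theorem only by observing that every counterexample sticker in Appendix~\ref{sec:uncoverable} is hole-free, so the proof of Theorem~\ref{thm:p_ts} goes through unchanged; your two directions spell out why that suffices. One aside is false and should be dropped: symmetric trees are not automatically hole-free under the paper's notion (simple connectivity of the closed animal). For example, the $3\times 3$ ring with one corner cell removed is a path under edge adjacency, yet it encloses its centre through a corner contact. Nothing depends on that aside, because you rest the step on the explicit check of the listed stickers.
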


\section{NP-hardness of Flat Cover}
\label{sec:hardness}

For any \TS$(\calP, \sim, \calQ)$, the following decision problem, which we name \FC{}, can be considered.

\begin{problem}{Flat Cover}{$(\calP, \sim,\calQ)$}
  Input: $P\in\calP, Q\in\calQ$

  Question: Can $P$ flatly cover $Q$?
\end{problem}

In this section, we show that \FC{} for polyominoes is $\NP$-hard.

\begin{theorem}\label{thm:bc_hardness}
  Let $\calP$ be the set of all polyominoes, and $\sim_E$ be the equivalence relation under translation, rotation, and reflection. Then,
  \FC$(\calP, \sim_E, \calP)$ is $\NP$-complete.
\end{theorem}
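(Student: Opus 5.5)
Membership in $\NP$ comes first, and the only point to check is that certificates are short. If $P$ flatly covers $Q$, we may discard every copy of $P$ that misses $Q$. The remaining copies are pairwise disjoint and each meets $Q$, so there are at most $|Q|$ of them. Each copy is specified by one of 8 orientations and a translation vector that places it within distance $|P|+|Q|$ of $Q$, so it has polynomial bit length. A verifier checks pairwise disjointness and containment of $Q$ in polynomial time, which gives a polynomial certificate.

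For hardness I would reduce from a known $\NP$-complete problem whose structure resembles ``choose placements so that every required cell is hit, with no two placements colliding.'' The natural source is \XTC{} (the macro is already defined in the preamble), or \PE{} on planar graphs. My plan is to reduce from a 1D intermediate problem first, as the abstract suggests. In the 1D problem the sticker is a finite subset $S\subset\bbZ$, with translations and reflection allowed, and the stain is a finite set $T$. The question is whether $T$ can be covered by pairwise disjoint copies of $S$.

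\XTC{} reduces to this 1D problem via gadgets in which $S$ is a ``comb'' whose teeth are placed at widely separated, Sidon-like distances. These distances are chosen so that the only copies of $S$ able to hit stain points without overlapping correspond to choosing a triple. Disjointness then enforces that each element is used by only one chosen triple, and coverage forces every element to be covered. The stain is the set of element points together with forcing points that pin down where copies may go.

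I would then lift the 1D instance to 2D polyominoes. Each 1D point becomes a long vertical tooth hanging from a horizontal spine. The spine is very long and tall relative to everything else, so that any copy of $P$ overlapping the stain region must be in the ``comb down'' orientation with its spine above the stain. Two copies stacked or rotated would then collide or fail to reach the stain. The stain is a small polyomino consisting of the required cells, connected by thin paths that run through positions forced to be coverable by any placement.

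The main obstacle is this 2D rigidity step. Rotated or reflected copies, and copies whose spines interleave, could cover cells in unintended ways. The stain must also itself be connected, which means its connecting paths have to be coverable without giving unintended freedom. I would handle this by making the spine much wider than the stain, and by making the tooth lengths encode the 1D positions. I would also give the teeth a distinctive asymmetric tip pattern, so that only teeth in the intended orientation fit the stain cells. A careful case analysis on orientations would then complete the correctness proof.
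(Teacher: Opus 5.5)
Your $\NP$-membership argument is fine. The hardness half, however, has a genuine gap, and it sits in the step you defer to ``a careful case analysis.'' The 2D lift you describe cannot simulate the 1D problem even when every copy is in its intended orientation.

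Suppose rigidity succeeds, so every copy meeting the stain is a comb hanging from a solid spine above the stain. Take two such copies $A$ and $B$, with $A$'s spine the lower one. Every tooth of $B$ that reaches the stain must lie outside the horizontal range of $A$'s spine, since otherwise it passes through that spine. All of $A$'s contact columns, however, lie inside that range. (Equivalently, by the Jordan curve theorem, two disjoint tooth--spine--tooth paths above the stain cannot have interleaved contact columns $a_1<b_1<a_2<b_2$.) So the placements you can realise are only nested or side by side, never interleaved.

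The copies in an \XTC{} reduction must interleave. Triples such as $\{1,3,5\}$ and $\{2,4,6\}$ are each realised by a single copy. In a Golomb-ruler construction like the paper's 1D one, copies of the template are shifted by less than its length, so their spans overlap. This is not a technicality. Exact cover of a linearly ordered set by pairwise non-crossing triples from a given family is solvable in polynomial time by interval dynamic programming: the chosen triple $\{i<k<l\}$ containing the leftmost element $i$ forces every other chosen triple into $(i,k)$, into $(k,l)$, or to the right of $l$. So the crossing structure is exactly what carries the hardness, and your lift discards it. Widening the spine or adding asymmetric tips only strengthens this obstruction.

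Separately, the 1D reduction itself is only gestured at (``Sidon-like distances'', unspecified ``forcing points''), with no argument excluding unintended placements. Also, the paper's 1D theorem is a separate result (translations only, segment stains), not an ingredient of its 2D proof.

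The paper sidesteps the planarity issue by reducing from \PE{} on induced subgrids, the source you mention only in passing. Its constraint graph is already a planar grid, so only neighbouring gadgets need to interact and no long-range crossings are required. The construction works as follows:
\begin{itemize}
\item A single $10\times 10$ sticker $P$ is used. Its orientations $P_1=P$, $P_2=\mathrm{rot}(P,90^\circ)$, $P_3=\mathrm{rot}(P,-90^\circ)$ play the role of the three colours.
\item Vertices sit on a grid scaled by $8$. The stain places $Q_0=P_1\cap P_2\cap P_3$ at each uncoloured vertex and $P_i$ at each vertex precoloured $i$.
\end{itemize}
Three computer-verified properties finish the proof:
\begin{itemize}
\item $Q_0$ has exactly three minimal flat covers, one centred copy of each $P_i$.
\item Each $P_i$ has exactly one minimal flat cover.
\item Two stickers at adjacent grid positions are disjoint if and only if their orientations differ.
\end{itemize}
To salvage your framework you would need either a planar source problem of this kind or genuine crossing gadgets, and your rigid comb-from-above design rules out the latter.
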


We show this by a reduction from \PE{} on induced subgrids.

\begin{theorem}[\PE{} on induced subgrids \cite{demange2013coloring}]\label{thm:precol}
  Let $G = (V, E)$ be an induced subgrid, i.e., an induced subgraph of the infinite integer grid $\bbZ^2$ where $E = \{ \{u, v\} \subset V \mid \|u - v\|_1 = 1 \}$.
  Given a subset $S \subset V$ and a partial coloring $\varphi : S \to \{1, 2, 3\}$ (a precoloring), the decision problem of whether there exists a proper 3-coloring $c : V \to \{1, 2, 3\}$ such that $c(v) = \varphi(v)$ for all $v \in S$ and $c(u) \neq c(v)$ for all $\{u, v\} \in E$ is $\NP$-complete.
\end{theorem}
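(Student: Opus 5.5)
The statement is cited from \cite{demange2013coloring}, but I would reprove it as follows. Membership in $\NP$ is immediate: a full coloring $c$ is a certificate, checkable in linear time. For hardness I would reduce from planar 3-SAT, where the variable--clause incidence graph is planar and can be drawn orthogonally on a polynomial-size grid with all vertices of degree at most $3$. The idea is to use precolored vertices as \emph{list constraints}. An uncolored grid vertex with a neighbor precolored $3$ effectively has list $\{1,2\}$, and one with two neighbors precolored $2$ and $3$ is forced to $1$. This turns the problem into a list-coloring problem on a subgrid with lists of size at most $3$.

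\textbf{Wires.} A wire is a path of grid vertices, each fenced on its free sides by vertices precolored $3$, so every wire vertex has list $\{1,2\}$. Along such a path the colors must alternate. Hence the color at one end determines the color at the other end up to the known parity of the path length. Bends are free, since a path may turn in the grid; the fence vertices just follow the path. The parity can be adjusted by a small detour, so that a wire transmits either a literal or its negation. A variable is represented by a single $\{1,2\}$-vertex, with true $=1$ and false $=2$. Fan-out to at most three clauses uses the up to three free grid neighbors of that vertex, or a short branching tree of wires. We must also make sure that no unintended grid adjacencies occur. Every vertex not in a gadget is simply deleted, which is why we need \emph{induced} subgrids. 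A spacing argument then shows that distinct wires stay at distance at least $2$, keeping the fences consistent.

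\textbf{Clause gadget.} This is the step I expect to be the main obstacle. A single clause vertex adjacent to three literal ends does not implement OR. If its fourth neighbor is precolored $3$, it fails exactly when the literals use both colors $1$ and $2$, which is the wrong predicate. I would therefore look for a small constant-size subgrid $H$ with three designated ports. Each port is attached to a $\{1,2\}$-wire end. The required property is that the precoloring of $H$ extends iff not all three ports carry the \emph{false} color.

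My first attempt is to use the third color inside the gadget. Each literal end is joined to an internal vertex whose list is $\{2,3\}$ (fenced by a $1$), which gets pushed to $3$ exactly when the literal is false. Three such vertices then meet at a central vertex, or a small $2\times 2$ block, whose remaining neighbors are precolored so that it has list $\{3\}$ only when all three neighbors are forced to $3$. If the hand design fails, I would find such an $H$ by exhaustive computer search over small bounded patches of the grid with precolorings. There are finitely many candidates, and correctness of a candidate is a finite check over the $2^3$ literal assignments.

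Finally, I would assemble the gadgets along the orthogonal drawing. I would check that the construction has polynomial size, and that satisfying assignments correspond exactly to proper extensions of $\varphi$: the wires force consistency, and the clause gadgets force every clause to be satisfied. This yields $\NP$-hardness on induced subgrids.
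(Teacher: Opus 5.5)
The paper gives no proof of this statement; it imports it from \cite{demange2013coloring}. Your reduction therefore has to stand on its own, and it has two genuine gaps.

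The first gap is in the wires. The grid is bipartite, so every path between two cells has length congruent mod $2$ to their $\ell_1$-distance, and no detour changes parity. More strongly, any connected set of vertices with list $\{1,2\}$ is forced into the checkerboard pattern or its swap. So whether a port receives $x$ or $\neg x$ is dictated by the port's checkerboard class relative to the variable vertex, not by routing. Shifting a clause gadget flips all three of its ports at once, so you get no per-literal control, and even unnegated literals may arrive inverted.

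An inverter must pass through the third color, and a single chain of $2$-list vertices only gives a one-way implication. For $a,b$ with list $\{1,2\}$, the chain $a$--$\{1,3\}$--$\{2,3\}$--$b$ enforces exactly $a=1\Rightarrow b=1$. You would need two such odd-length chains in opposite directions closing a cycle, laid out without stray grid adjacencies. Alternatively, design a clause gadget for every forbidden pattern in $\{1,2\}^3$. Or start from planar monotone 3-SAT, put all variable vertices and ports in one checkerboard class, and use a color-swapped gadget for negative clauses.

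The second gap is the clause gadget, which is the heart of the reduction. You do not establish it, and your hand design cannot work. A vertex has no available color only if its neighbors show all three colors. When all three internal vertices are $3$, the central vertex sees color $3$ plus at most one precolored neighbor, so it is never blocked in the all-false case.

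The missing idea is to use the third color as a selector. Take a vertex $z$ whose three grid neighbors start three chains and whose fourth grid cell is absent, so $z$ keeps the full list $\{1,2,3\}$. The color of $z$ names a true literal. Attach each literal end by a $2$-list chain enforcing exactly ``literal $k$ false $\Rightarrow z\neq k$''. For instance:
\begin{itemize}
\item $z$--$\{1,3\}$--$\{2,3\}$--$a$ enforces $a=2\Rightarrow z\neq 1$;
\item $z$--$\{2,3\}$--$\{2,3\}$--$b$ enforces $b=2\Rightarrow z\neq 2$;
\item $z$--$\{1,3\}$--$\{1,2\}$--$c$ enforces $c=2\Rightarrow z\neq 3$.
\end{itemize}
Each chain enforces nothing else. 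This gadget still has to be embedded with its fences and the right port parities. Promising an exhaustive computer search is not a proof unless you exhibit and verify the resulting gadget.

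There is also a smaller error. Fencing every free side of a straight wire with $3$'s puts two $3$-precolored cells side by side. That pair is an edge of the induced subgrid, so every instance becomes infeasible. Use a single pendant $3$ per wire vertex, placed on alternating sides.
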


\begin{figure}[htbp]
  \centering
  \begin{minipage}[c]{0.45\textwidth}
    \centering
    \begin{minipage}{10em}
      \small
\begin{verbatim}
0000001000
0110001110
0111111110
0011111100
0011111100
1111111111
0011111100
0111111110
0100001110
0000001000
\end{verbatim}
    \end{minipage}
    \caption{The gadget $P$. Each `1' represents a cell.}
    \label{fig:gadget_p}
  \end{minipage}
  \hfill
  \begin{minipage}[c]{0.45\textwidth}
    \centering
    \begin{minipage}{8em}
      \small
\begin{verbatim}
11000011
01111111
01111110
01111110
01111110
01111110
11111110
10000011
\end{verbatim}
    \end{minipage}
    \caption{The gadget $Q_0$, defined as $P_1 \cap P_2 \cap P_3$.}
    \label{fig:gadget_q0}
  \end{minipage}
\end{figure}

\begin{proof}[Proof of Theorem \ref{thm:bc_hardness}]
  Since it is clear that this problem belongs to $\NP$, we focus on showing $\NP$-hardness.
  We reduce \PE{} on induced subgrids to \FC{}.
  Let an instance of \PE{} be given by an induced subgrid $G=(V, E)$ and a partial coloring $\varphi : S \to \{1, 2, 3\}$. W.l.o.g. we can assume $G$ is connected.
  We construct a sticker $P$ and a stain $Q$ in polynomial time.

  Let $P$ be the $10 \times 10$ polyomino defined in Figure \ref{fig:gadget_p}.
  Let $P_1$, $P_2$, $P_3$ denote $P$, $\mathrm{rot}(P, 90^\circ)$, and $\mathrm{rot}(P, -90^\circ)$ respectively, corresponding to colors $1, 2, 3$.

  For each vertex $v \in V$ of the induced subgrid, we associate a position in the plane at coordinates $8v$ (scaling by factor 8).
  The sticker $P$ is designed to have the following properties that we have computationally verified with an exhaustive search.

  \begin{enumerate}
    \item There exists an $8 \times 8$ polyomino $Q_0$ derived from $P$ (Figure~\ref{fig:gadget_q0}) such that $Q_0$ has exactly three minimal flat covers (covers without any sticker disjoint from $Q_0$) by $P$, where either $P_1$, $P_2$, or $P_3$ is placed locating its center at the center of $Q_0$.
    \item Each $P_i$ ($i \in \{1, 2, 3\}$) has exactly one minimal flat cover by $P$, where $P_i$ (sticker) is placed locating its center at the center of $P_i$ (stain).
    \item When two stickers are placed at adjacent grid positions (distance 8 apart) each having the orientation of either $P_1, P_2,$ or $P_3$, the two stickers don't overlap if and only if they have distinct colors. Specifically, the protrusions of two stickers with the same orientation placed at distance 8 overlap, preventing such placement.
  \end{enumerate}

  For reduction, we place the following for each $v \in V$: If $v \notin S$ (uncolored), place $Q_0$ at position $8v$. If $v \in S$ with $\varphi(v) = i$, place $P_i$ at position $8v-(1,1)$ (so that its $8\times 8$ core aligns with $8v$). The total stain $Q$ is the union of all these regions.

  A flat cover of $Q$ by $P$ exists if and only if there exists a valid 3-coloring extension of $\varphi$ to $G$. The correctness follows directly from the three properties of $P$.

  Since $G$ is connected, the construction and $Q_0$'s shape ensure $Q$ is connected. Thus, both $P$ and $Q$ are polyominoes, concluding the proof.
\end{proof}

\section{NP-completeness in One Dimension}
\label{sec:1d_hardness}

In the previous section, we showed that \FC{} is $\NP$-complete for 2D polyominoes. While the problem is computationally trivial for connected 1D segments (as a single segment can trivially cover the line), generalizing this base case in two distinct directions---either by increasing the dimension to 2D (as shown previously) or by removing the connectivity constraint in 1D (as shown in this section)---leads to $\NP$-completeness.

\begin{theorem}
  \label{thm:1d_disconnected}
  Let $\calP$ be the set of all finite subsets of $\bbZ$, and let $\calQ = \{ [0, L) \cap \bbZ \mid L \in \bbZ^+ \}$ be the set of solid segments starting at 0. Let $\sim_T$ be the equivalence relation under translation.
  The problem \FC$(\calP, \sim_T, \calQ)$, which asks if a segment of length $L$ can be covered by non-overlapping translations of a single template $T \in \calP$, is $\NP$-complete.
\end{theorem}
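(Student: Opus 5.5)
Membership in $\NP$ is routine. In any flat cover of $[0,L)\cap\bbZ$ we may discard every translate that misses the segment, so at most $L$ translates remain. Each remaining shift lies in the window $[-\max T,\, L-\min T]$. Hence the list of shifts is a polynomial-size certificate, and disjointness and coverage can be checked in polynomial time. For hardness I would reduce from $3$-\textsc{Partition}, which is strongly $\NP$-complete, so numbers may be written in unary and the segment length $L$ may be polynomial in the input.

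The first step is a \emph{chunk lemma}. Take finite sets $C_1,\dots,C_k\subset[0,W)$ and let $T=\bigcup_i (C_i+M_i)$, with offsets $M_i=2^i N$ and $N\gg L+W$. Then any translate of $T$ meets $[0,L)$ in (a translate of) at most one chunk. The chunks falling outside the segment must never create overlaps, and I plan to verify this from the structure of differences of powers of two. Two translates $T+t$ and $T+t'$ that both hit the segment, with chunks $i$ and $j$, overlap only if $p-p'\in C_a-C_b$ for some $a,b$ with $2^a-2^b=2^j-2^i$, where $p,p'$ are the in-segment positions. Up to the trivial solutions this forces either $a=b$ and $i=j$, which is exactly an overlap inside the segment, or $a=j$ and $b=i$. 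The second case also corresponds to an overlap of chunks $i$ and $j$ placed near each other. If needed, I would replace powers of two by a Sidon-type sequence so that only the first case survives. Chunks that stick partly out of $[0,L)$ would be neutralised by padding: both ends of the segment get long forced regions that only one designated chunk (a long interval) can fill. After this lemma, the problem becomes: cover $[0,L)$ exactly, inside the segment, by disjoint translates of chunks from a prescribed family, each chunk usable any number of times.

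The second step encodes $3$-\textsc{Partition} with items $a_1,\dots,a_{3m}$ and bin size $B$. Item $i$ becomes a chunk consisting of an interval of length $a_i$ (its ``body'') together with a single ``key'' cell at a distance $D_i$ from the body. The segment is laid out as $m$ bin regions of length $B$, separated by barrier cells covered by a forced separator chunk. A key region far to the right has one cell reserved for each item. The distances $D_i$ are chosen, again from a Sidon-like set scaled beyond the total bin length, so that the key of chunk $i$ can land on a key-region cell only when that cell is the one reserved for item $i$. Key cells can be covered by nothing else, so each item chunk is used exactly once. The bin regions must then be tiled exactly by the bodies, giving a valid partition. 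Conversely, a partition yields a cover directly. The paddings and separators ensure that no other chunk placements can fill these regions.

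The main obstacle is \emph{rigidity}: ruling out unintended placements. Examples are a body placed straddling a separator, a key landing inside a bin region, or several translates sharing the key area. My first move would be to make every scale separation generous: key distances far exceed total bin length, bin lengths far exceed item sizes, and chunk offsets far exceed everything else. I would then argue cell by cell, from the leftmost uncovered cell, that the only chunk that can cover it is the intended one. If this greedy forcing argument breaks, I would fall back to reducing from \textsc{Exact Cover} with keys, where uniqueness constraints are easier to enforce.
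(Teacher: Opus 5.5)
\textbf{There is a genuine gap: the proposal does not yet give a working reduction.} Your route (3-\textsc{Partition} with keys) differs from the paper's (X3C with a frame gadget and a Golomb ruler), but its central mechanism cannot work. Your NP-membership argument is fine.

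\textbf{The key mechanism pins every body in place.} A chunk made of a body $[0,a_i)$ and one key cell at fixed distance $D_i$ is rigid. If the body starts at $p$, the key sits at $p+D_i$. So requiring the key to hit the single cell $K_i$ reserved for item $i$ forces the body to start at exactly $K_i-D_i$. An item then cannot choose its bin, nor its offset inside a bin (which depends on which other items share it). Hence no 3-\textsc{Partition} solution can be expressed. To restore that freedom, the key of item $i$ would have to land anywhere in a window of width about $mB$, and all unused cells of every window would have to be covered by other pieces. That contradicts ``key cells can be covered by nothing else'' and brings back exactly the rigidity problem you defer. Choosing the $D_i$ from a Sidon set only keeps different items' keys apart; it does not address this.

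\textbf{The chunk lemma is misanalysed.} The families $a=b,\ i=j$ and $a=j,\ b=i$ are trivial solutions of $2^a-2^b=2^j-2^i$ for \emph{any} offset sequence. They are the only solutions for a Sidon sequence, and powers of two already form one. So switching sequences removes neither family.
\begin{itemize}
\item The family $a=j,\ b=i$ gives $p-p'\in C_j-C_i$, which is exactly the in-segment overlap.
\item The family $a=b,\ i=j$ says that two translates using the same chunk $i$ at positions $p\ne p'$ collide whenever $p-p'\in C_a-C_a$ for \emph{any} chunk $a$. This includes chunks lying far outside the segment. For example, with $C_1=\{0\}$ and $C_2=\{0,1\}$, two uses of chunk $1$ at positions $0$ and $1$ collide through chunk $2$.
\end{itemize}
So the reduced problem is not free reuse of chunks. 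Repeated uses of one chunk must be spaced outside $\bigcup_a (C_a-C_a)$. For instance, a padding interval longer than $B+1$ would forbid reusing your separator chunk at spacing $B+1$. Separately, offsets $2^iN$ make $T$ exponentially long if $T$ is written as a bit string, which is why the paper uses a ruler with $a_r=O(r^2)$.

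\textbf{How the paper avoids both problems.} It reduces from X3C, where set gadgets need no positional freedom, and it uses the same-chunk collision effect you overlooked as its main tool.
\begin{itemize}
\item The $1^{N^2}$ blocks of the frame gadget $G_0$ forbid any two copies of $T$ at relative shift less than $L$.
\item The Golomb-ruler offsets $2a_iW$ ensure that two copies aligned on $Q$ via $G_i$ and $G_j$ interact only through those two gadgets.
\item The $11011$ stoppers, together with the $N$-granularity of $\mathrm{code}(S_i)$, force every set gadget's main body to coincide exactly with $Q$.
\end{itemize}
As a result, no keys are needed.
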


Membership in $\NP$ is trivial. We prove $\NP$-hardness of this problem by a reduction from \XTC{}, which is an $\NP$-complete problem~\cite{DBLP:books/fm/GareyJ79} defined as follows: Given a universe $U$ of size $3q$ and a collection $\calS = \{S_1, \dots, S_r\}$ of 3-element subsets of $U$, does there exist a subcollection $\calS' \subset \calS$ such that every element of $U$ belongs to exactly one subset in $\calS'$?

The idea of the reduction is as follows.
We introduce two types of gadgets: one ``frame''  gadget $G_0$ and $r$ ``set'' gadgets $G_1, \dots, G_r$, and place them in a template $T$ with strategic distances based on a \emph{Golomb ruler}~\cite{Sidon1932,babcock1953intermodulation} (a set of integers with distinct pairwise differences).
The construction enforces that if there exists a flat cover of the target segment by $T$, it must use $q+1$ copies of $T$, where the segment cover is contributed by exactly one frame gadget and $q$ set gadgets, which represents a solution to the \XTC{} instance. The Golomb ruler property ensures that the templates do not overlap with each other, and the shape of the gadgets which we detail below ensures different types of solutions are impossible.

\begin{proof}
  Given an instance $(U = \{0, \dots, 3q-1\}, \calS = \{S_1, \dots, S_r\})$ of \XTC{}, we construct a target interval $[0, L)$ and a single template $T \subset \bbZ$.

  We let $N = 10(3q + r + 1)$, which we call \emph{element size}, and $L = 2N^2 + 3qN$ be the size of the target interval $Q=\{0, \dots, L-1\}$. We also define $W = L + 10(r+1)$, which we call \emph{gadget size}, because each gadget is represented as a binary string of length $W$.

  The template $T$ is constructed as a union of the gadgets $G_0, G_1, \dots, G_r$, placed at relative positions $0, 2a_1W, 2a_2W, \dots, 2a_rW$, where $\{a_0 = 0, a_1, \dots, a_r\}$ forms a Golomb ruler, i.e., all differences $a_i - a_j$ ($i \neq j$) are non-zero and distinct. While finding an optimal Golomb ruler is hard, a ruler with $a_r \in O(r^2)$ can be explicitly constructed in polynomial time~\cite{erdos1941problem, drakakis2009review}.

  Each gadget consists of a \emph{left stopper}, a \emph{main body}, and a \emph{right stopper}, whose lengths are $5(r+1)$, $L$, and $5(r+1)$ respectively. Each right stopper is a reversed binary string of the left stopper.
  We denote a run of $k$ ones as $1^k$ and $k$ zeros as $0^k$.

  The frame gadget $G_0$ is defined as follows.

  \[
    G_0 = \underbrace{0^{5r}11110}_{\text{left stopper}} \circ \underbrace{1^{N^2} \, 0^{3qN} \, 1^{N^2}}_{\text{main body}} \circ \underbrace{011110^{5r}}_{\text{right stopper}}
  \]

  For each $i \in \{1, \dots, r\}$, $G_i$ (set gadget) is defined as follows.
  \[
    G_i = 0^{5(r-i)}110110^{5i} \circ 0^{N^2} \mathrm{code}(S_i) 0^{N^2} \circ 0^{5i}110110^{5(r-i)}
  \] where $\mathrm{code}(S_i) = \bigcup_{j \in S_i} 0^{jN}1^{N}0^{(3q-1-j)N}$ is a binary string of length $3qN$ encoding the set $S_i$.

  The template is defined as $T = \bigcup_{i \in \{0, \dots, r\}} 0^{2a_iW}G_i$. The total length of $T$ is polynomial in the input size since $a_r \in O(r^2)$.
  By construction, the distance between different gadgets $G_i$ and $G_j$ is $2(a_i-a_j)W$, and placing two copies of $T$ with relative shift of $2(a_i-a_j)W$ results in a collision if and only if $i,j >0$ and $S_i\cap S_j\neq\emptyset$.

  Note the following properties:
  \begin{itemize}
    \item Due to the block of 1's in the main body of the frame gadget, placing two copies of $T$ with relative shift of less than $L$ results in a collision.
    \item In each gadget, every 1 has another 1 to its immediate left or right, meaning filling a single 0 between 1's without collision is impossible.
  \end{itemize}

  \paragraph{Correctness}
  ($\Rightarrow$) Suppose $S_{k_1}, \dots, S_{k_q} \subset \calS$ is an exact cover. We place one copy of $T$ so that the main body of its $G_0$ aligns with $Q$. For each $i \in \{1, \dots, q\}$, we place one copy of $T$ so that the main body of its $G_{k_i}$ aligns with $Q$. This forms a valid cover of $Q$ by $T$, because the solution is an exact cover, so the selected sets are disjoint, ensuring that the placements do not collide, and $Q$ is clearly fully covered.

  ($\Leftarrow$) Suppose $Q$ is covered by a set of non-overlapping $m+1$ copies of $T$, and let $G_{k_0}, \dots, G_{k_m}$ be all the gadgets contributing to the cover of $Q$. We can assume $m+1$ gadgets here because due to the distances between gadgets, at least one gadget per a copy of $T$ can intersect with $Q$.

  We first prove that $\{k_i\}$ must contain $0$ (frame gadget), assuming otherwise, i.e., $Q$ is fully covered by set gadgets. Since each set gadget has exactly $8 + 3N < 4N$ 1's, and $Q$ has more than $2N^2$ 1's, the number of used gadgets must be at least $N/2$, which is larger than $5r$, indicating that there exists an index (say $j$) contained in $\{k_i\}$ more than 4 times. Because of the length of $Q$ (which is $L$) and $G_j$ (which is $W < 2L$), simple arithmetic shows that there exist two copies of $T$ with relative shift less than $L$. However, this is impossible because it produces a collision on the frame gadgets.

  Let's consider how $G_0$ covers $Q$. Since its main body and left/right stoppers are separated by a single 0 between 1's, the intersection of $G_0$ and $Q$ must either entirely lie on the main body, or its left or right stopper. If we suppose $Q$ never intersects with $G_0$'s main body, since the stoppers have only four 1's, we can easily lead to a contradiction with the same argument as we used in the previous paragraph. Therefore, there exists a copy of $T$ covering $Q$ precisely with $G_0$'s main body. It is easy to see no other copy of $T$ can cover $Q$ with $G_0$ without collision, so we can set $k_0 = 0$ and $k_i > 0$ for $i \in \{1, \dots, m\}$.

  Now that $G_0$'s main body is placed at $Q$, the remaining central blank of length $3qN$ should be filled by $m$ set gadgets. For each of them, it is never possible for its left or right stopper to intersect with the blank, because of the pattern 11011 having a 0 impossible to fill without collision. That means the blank is filled by 1's in the main body of each set gadget, having the consecutive length that is a multiple of $N$, meaning the alignment of multiple of $N$ must hold. That is, a possible placement of a set gadget is placing its main body to precisely cover $Q$ or with a shift of $sN$ with an integer $s$ with $|s| < 3q$. However, choosing $s\neq 0$ is impossible as it causes its left or right stopper to collide with the 1's in the already-placed $G_0$'s main body. Thus, $m=q$ set gadgets' main bodies must align with $Q$, completing the proof.
\end{proof}

\section{Open Problems}

We have shown that \TS{} is decidable for polyominoes, but what if the domain is not polyominoes? The following problem is, to the best of our knowledge, unsolved even for $d=1$.

\begin{question}\label{que:ts}
  Let $d$ be a positive integer and $\calP$ be the set of all finite subsets of $\bbZ^d$. Define $\sim_T$ on $\calP$ so that for $A,B\in \calP$, $A\sim_T B\iff\exists x\in \bbZ^d, A=B+x$. Is \TS$(\calP,\sim_T, \calP)$ decidable?
\end{question}

As we have proven, \FC{} is $\NP$-complete for general polyominoes. What would happen if we limited the height of the stain?

\begin{question}
  Let $\calP$ be the set of polyominoes and $\sim_E$ be the equivalence relation under translation, rotation, and reflection. Let $\calQ$ be the set of all polyominoes with a height of 1. Is \FC$(\calP, \sim_E, \calQ)$ $\NP$-complete?
\end{question}

In \TS{}, the stain was given as input, but what would be the computational complexity of the problem of, conversely, finding a stain that cannot be covered by a given sticker?

\begin{question}
  Is the problem of finding a polyomino with the minimum number of cells that a given polyomino cannot flatly cover $\NP$-hard?
\end{question}

\bibliography{main}

\newpage

\section*{Appendix}
\appendix

\section{Complete Proof of Theorem~\ref{thm:coverable}}\label{sec:coverable}

\begin{proof}
  From Lemma~\ref{lem:cover5Y},~\ref{lem:cover5T} and the following Lemma~\ref{lem:cover5F},~\ref{lem:cover63},~\ref{lem:cover6N},~\ref{lem:cover6S}, every polyomino in $\calJ$ is always-coverable and thus Theorem~\ref{thm:coverable} holds.
\end{proof}

\begin{figure}[ht]
  \centering
  \begin{minipage}{.17\textwidth}
    \centering
    \includegraphics[width=.75\linewidth]{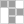}
    \captionof{figure}{\\Pentomino ``F"}
    \label{fig:stain5F}
  \end{minipage}
  \begin{minipage}{.18\textwidth}
    \centering
    \includegraphics[width=.9\linewidth]{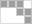}
    \captionof{figure}{\\Hexomino ``3"}
    \label{fig:stain63}
  \end{minipage}
  \begin{minipage}{.20\textwidth}
    \centering
    \includegraphics[width=.9\linewidth]{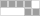}
    \captionof{figure}{\\Hexomino ``N"}
    \label{fig:stain6N}
  \end{minipage}
  \begin{minipage}{.18\textwidth}
    \centering
    \includegraphics[width=.9\linewidth]{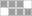}
    \captionof{figure}{\\Hexomino ``S"}
    \label{fig:stain6S}
  \end{minipage}
\end{figure}

\begin{lemma}\label{lem:cover5F}
  Pentomino~F (Figure~\ref{fig:stain5F}) is always-coverable.
\end{lemma}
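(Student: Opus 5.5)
We argue by contradiction, in the same way as Lemmas~\ref{lem:cover5Y} and~\ref{lem:cover5T}. Let $Q$ be a pentomino~F and suppose $Q$ cannot be covered by $P$. By the remark on sufficiently large counterexamples, we may take the width and height of $P$ to exceed any fixed constant, say 6. We then study configurations of two copies of $P$ near a side of $P$ and extract local constraints on the shape of $P$ there.

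\textbf{Step 1: isolated cells in the first row.} Let $P_{\mathrm{rev}}$ be the reflection of $P$ about its top side, so $P_{\mathrm{rev}}\oplus P$ is a valid two-sticker configuration and must not include $Q$. Pentomino~F contains a $1\times 2$ domino with a cell above one end and the pattern continuing below. So two consecutive first-row cells, one of which has a neighbour in the second row, together with their mirror images would very likely include F. We would check this by examining the few local patterns, as in Figure~\ref{fig:proof_5y_1}. A run of first-row cells of length at least 2 must have a cell somewhere below it, since $P$ is connected and taller than one row. Hence the first row would contain no two consecutive cells.

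\textbf{Step 2: enlarge the family of non-overlapping shifts.} Given Step~1, the shifted copies $P_{\mathrm{L}}$ and $P_{\mathrm{R}}$ (the reflection shifted one square left or right and one square down) do not overlap $P$, so each $P_{\mathrm{L}}\oplus P$ and $P_{\mathrm{R}}\oplus P$ must avoid $Q$. Using these, we rule out first-row cells at distance 2. That gives the further shifts $P_{\mathrm{L2}}$ and $P_{\mathrm{R2}}$, and possibly also the diagonal shifts $P_{\mathrm{LD}}$ and $P_{\mathrm{RD}}$, as in Lemma~\ref{lem:cover5Y}.

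\textbf{Step 3: local case analysis at a first-row cell.} Fix a first-row cell $X$. Its neighbour below, $X'$, must be in $P$ by connectivity, since the first-row cells are isolated. Next consider how $P$ continues from $X'$: the left and right neighbours of $X'$, the cell two below $X$, and the diagonal cells. For each possibility, one of the available configurations ($P_{\mathrm{rev}}$, $P_{\mathrm{L}}$, $P_{\mathrm{R}}$, $P_{\mathrm{L2}}$, $P_{\mathrm{R2}}$ together with $P$) should be shown to include F, using its asymmetric shape: a T-like junction with one arm bent. The reflected copy contributes the mirror of the column through $X$, and F needs only three rows. So the union of the top three rows of $P$ near $X$ with their mirror images, suitably shifted, should always contain an F.

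The main obstacle is Step~3. Unlike Y, the pentomino F is chiral and has no straight segment of length 4, so the bottom-of-column argument of Lemma~\ref{lem:cover5Y} does not transfer directly. The case tree may branch more, and some branches might survive using only reflections about the top side. If that happens, I would first bring in the diagonal shifts $P_{\mathrm{LD}}$ and $P_{\mathrm{RD}}$ once the cells lower-left and lower-right of $X$ have been excluded. Failing that, I would switch to the global shortest-path strategy of Lemma~\ref{lem:cover5T}: take paths $D_1$ and $D_2$ across $P_\mathrm{U}\oplus P_\mathrm{D}\oplus P$ and $P_\mathrm{L}\oplus P_\mathrm{R}\oplus P$, and analyse their first contact. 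F is contained in a $3\times 3$ box around a crossing, so the crossing of two shortest paths should force an F.
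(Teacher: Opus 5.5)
There is a genuine gap, and it starts in Step~1, whose justification is false: $P_{\mathrm{rev}}\oplus P$ need not include F.

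Write cells as $(x,r)$ with $r$ the row index, so reflection about the top side sends row $r$ to row $1-r$. Suppose the first row contains exactly $(a,1),(a+1,1)$, and $P$ leaves them only through the column $(a,2),(a,3),\dots$. Near the top, $P_{\mathrm{rev}}\oplus P$ is then a vertical bar in column $a$ plus the two cells $(a+1,0),(a+1,1)$. This set contains pentominoes Y and P, which is why the argument works in Lemma~\ref{lem:cover5Y}, but it contains no F.

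Step~1 can be repaired. Use instead the reflection shifted one column sideways, $R_0$; it lies above the top side, so it never meets $P$. If $(c,1),(c+1,1),(c,2)\in P$, then $R_0\oplus P\supset\{(c+1,-1),(c+1,0),(c+2,0),(c+1,1),(c,1)\}$, which is an F rotated by $180^\circ$.

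The decisive problem is Step~3. It carries the whole proof, you do not carry it out, and its guiding claim is false: no analysis confined to bounded depth below a first-row cell can close. Suppose that around $X=(0,1)$ the polyomino $P$ is just a long column $(0,1),\dots,(0,m)$. This is consistent with everything Steps~1--2 yield, and with $X$ having no lower-left or lower-right neighbour. Near $X$, each union you propose then lies in at most two columns. That covers $P_{\mathrm{rev}}$, $P_{\mathrm{L}}$, $P_{\mathrm{R}}$, $P_{\mathrm{L2}}$, $P_{\mathrm{R2}}$, $P_{\mathrm{LD}}$, $P_{\mathrm{RD}}$, and further downward shifts of depth much less than $m$, each combined with $P$. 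Every orientation of F occupies three columns, so none of these unions contains an F. In particular, the top three rows near $X$ together with shifted mirror images need not contain an F.

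The contradiction only appears at the depth where the shifted reflection first touches $P$, and that is the missing idea (it is the paper's route):
\begin{itemize}
\item Let $L_i$ (resp.\ $R_i$) be the reflection of $P$ about its top side, shifted one square left (resp.\ right) and $i$ squares down.
\item Take the least $j$ with $L_{j+1}\cap P\neq\emptyset$ and the least $k$ with $R_{k+1}\cap P\neq\emptyset$, and assume $j\le k$ by symmetry. Then $L_0,\dots,L_j,R_0,\dots,R_j$ all miss $P$.
\item Pick $X'\in L_j$ directly above $Y\in P$, and split on whether the preimage $X$ of $X'$ lies above, below, or in the same row as $Y$.
\end{itemize}
This also subsumes your Step~1, which is the case $j=0$ with $X$ and $Y$ in the same row.

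Your fallback to the shortest-path argument of Lemma~\ref{lem:cover5T} does not work as stated. Two straight crossing paths form a plus shape, which contains no F, so their crossing alone forces nothing.
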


\begin{figure}[ht]
  \centering
  \begin{minipage}{.35\textwidth}
    \centering
    \includegraphics[width=.9\linewidth]{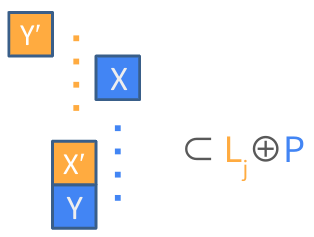}
    \captionof{figure}{}
    \label{fig:5Fsetup1}
  \end{minipage}
  \begin{minipage}{.20\textwidth}
    \centering
    \includegraphics[width=.9\linewidth]{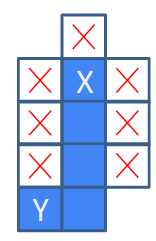}
    \captionof{figure}{}
    \label{fig:5Fcontra1}
  \end{minipage}
  \begin{minipage}{.20\textwidth}
    \centering
    \includegraphics[width=.9\linewidth]{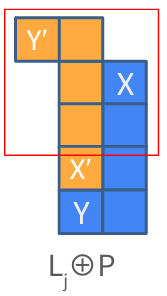}
    \captionof{figure}{}
    \label{fig:5Fcontra2}
  \end{minipage}
  \begin{minipage}{.20\textwidth}
    \centering
    \includegraphics[width=.9\linewidth]{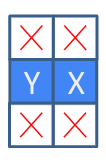}
    \captionof{figure}{}
    \label{fig:5Fcontra3}
  \end{minipage}
\end{figure}

\begin{proof}
  Let $Q$ be a pentomino~F and assume that $Q$ cannot be covered by a sufficiently large polyomino $P$. Let $L_0$ (resp. $R_0$) be the polyomino obtained by reflecting $P$ about its top side and shifting it to the left (resp. right) by one square, and let $L_i, R_i$ be polyominoes obtained by shifting $L_0, R_0$ down by $i$ squares, respectively. Take the smallest $j$ such that $L_{j+1}\cap P$ is non-empty and the smallest $k$ such that $R_{k+1}\cap P$ is non-empty. From symmetry, we may assume that $j\le k$. That is, none of $L_0, \cdots L_j, R_0, \cdots, R_j$ overlaps $P$. From the fact that $L_{j+1}$ overlaps $P$, we can take $X'\in L_j$ and $Y\in P$ such that $Y$ is one square below $X'$. Let $X$ be the cell in $P$ corresponding to $X'$ (Figure~\ref{fig:5Fsetup1}). In the following, we divide cases according to the difference between the row index of $X$ and the row index of $Y$ in $P$.

  \begin{itemize}
    \item If the row index of $X$ is smaller than the row index of $Y$, $P$ does not contain the squares marked by $\times$ in Figure~\ref{fig:5Fcontra1}, since $L_j$ does not contain the square below $X'$, and $L_0, \cdots, L_j, R_0, \cdots, R_j$ do not overlap $P$. Therefore, by connectedness, $P$ contains all the squares below $X$ up to the square to the right of $Y$. Then $L_j\oplus P$ contains $Q$, which is a contradiction (Figure~\ref{fig:5Fcontra2}).
    \item If the row index of $X$ is greater than the row index of $Y$, by reversing up and down and reversing the roles of $L_j$ and $P$, a contradiction is derived in the same way.
    \item If the row index of $X$ equals the row index of $Y$, $P$ does not contain any squares above or below $X$ and $Y$ (Figure~\ref{fig:5Fcontra3}), since $L_j$ and $R_j$ do not overlap with $P$, and $L_j\oplus P$ and $R_j\oplus P$ do not include $Q$. If the number of consecutive cells to the left and right of $Y$ is $n$, the same argument applies sequentially to all cells to the left of $Y$ and to the right of $X$, establishing that there are no cells above or below them, so $P$ is a polyomino of height 1 and width $n$, which contradicts the assumption that $P$ is sufficiently large.
  \end{itemize}
\end{proof}

\begin{lemma}\label{lem:cover63}
  Hexomino~3 (Figure~\ref{fig:stain63}) is always-coverable.
\end{lemma}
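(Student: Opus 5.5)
We do not see the figure, so we first fix a concrete shape for hexomino~3. The name suggests a ``3''-like hexomino. A natural guess is the $2\times 3$-derived shape consisting of a vertical bar of three cells with two arms pointing left, at top and bottom, plus one more cell making it hexomino-sized. We treat it as a hexomino contained in a small neighbourhood of a row pair, and follow the template of Lemma~\ref{lem:cover5F}.

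Assume, for contradiction, that a sufficiently large polyomino $P$ fails to cover $Q$. Let $L_0$ (resp.\ $R_0$) be the reflection of $P$ about its top side, shifted one square left (resp.\ right). Let $L_i, R_i$ be these shifted down by $i$, and take the smallest $j$ such that $L_{j+1}$ or $R_{j+1}$ meets $P$; by symmetry assume it is $L_{j+1}$. Then $L_0,\dots,L_j,R_0,\dots,R_j$, and also the plain reflection shifted down by $0,\dots,j$ (which lies between them), are disjoint from $P$. Every one of these unions $L_i\oplus P$, $R_i\oplus P$ must avoid including $Q$.

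As in Lemma~\ref{lem:cover5F}, pick $X'\in L_j$ and $Y\in P$ with $Y$ directly below $X'$, and let $X\in P$ be the cell corresponding to $X'$. Split into cases by comparing the row indices of $X$ and $Y$.

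\textbf{Case: $X$ higher than $Y$.} Non-overlap of the earlier translates forbids the squares near $X$ and $Y$ that the earlier shifts would hit. Connectivity then forces a specific path of cells from $X$ downward. Together with the mirrored cells of $L_j$, this path forms a copy of $Q$, which is a contradiction.

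\textbf{Case: $X$ lower than $Y$.} This follows from the previous case by swapping the roles of $L_j$ and $P$ and flipping vertically.

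\textbf{Case: equal rows.} Here we propagate along the row, showing that no cells lie above or below $X$ and $Y$ and their horizontal neighbours. This forces $P$ to have height $1$, contradicting largeness.

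Because $Q$ has six cells rather than five, the local forbidden regions are larger, so I expect to need auxiliary translates such as $L_j$ shifted one further left, or $R_j$. Their disjointness follows from minimality of $j$.

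The main obstacle is the equal-row case and the ``connects by a path'' step. A six-cell $Q$ may not be formed immediately by a straight continuation, so the argument needs a secondary case split on where the path from $X$ first turns. I would attempt to handle each turning direction by using the mirrored cell pattern in $L_j$ (or $R_j$) to complete $Q$.

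If this shift-family approach fails, I would fall back on the four-reflection setup of Lemma~\ref{lem:cover5T}, with crossing shortest paths $D_1, D_2$, and do the local analysis at their first contact point.
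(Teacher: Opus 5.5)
\paragraph{There is a genuine gap: the proof targets a shape for which the statement is false.}
The shape you fix for Hexomino~3 cannot be the right one. A vertical bar of three cells with left arms at its top and bottom is exactly pentomino~U. Pentomino~U belongs to $\calI$ and is not always-coverable: Theorem~\ref{thm:uncoverable}, with an explicit $71\times 71$ sticker in Appendix~\ref{sec:uncoverable}. By Remark~\ref{rem:ac}, every hexomino that includes it is also not always-coverable. So for your $Q$ the statement is false, and no argument can succeed.

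The true Hexomino~3 must avoid every pentomino in $\calI$. The whole proof consists of checking that specific local cell patterns contain a copy of $Q$, so steps like ``this path forms a copy of $Q$'' are placeholders, not proof steps. One auxiliary claim is also false. The unshifted reflection of $P$ about its top side, moved down by one square, maps the first row of $P$ onto itself. So it always overlaps $P$ and is not disjoint from $P$ when $j\ge 1$.

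\paragraph{The pentomino~F template does not transfer as sketched.}
Even with the correct shape, the argument of Lemma~\ref{lem:cover5F} does not go through as you describe it. Your setup matches the paper's: $L_i,R_i$, minimal $j\le k$, cells $X,X',Y$, and three cases by row index.

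\emph{Case $X$ above $Y$.} The forced vertical segment plus its mirror in $L_j$ yields $Q$ only when the row gap is at least $2$. When the gap is exactly $1$, the forced cells form a six-cell $3\times 3$ configuration that is not $Q$. The paper therefore needs an extra analysis of where $P$ continues from $X$, showing that in every case $L_j\oplus P$ or $R_j\oplus P$ includes $Q$.

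\emph{Equal-row case.} The conclusion that $P$ has height $1$ does not come from propagation along the row. Non-overlap of $L_j$ and $R_j$ only excludes the squares upper-left and upper-right of the horizontal run $S$ through $X$ and $Y$. The paper then:
\begin{itemize}
\item uses largeness of $P$ to obtain a cell below $S$;
\item takes the leftmost such cell $Z$;
\item splits on whether $Z$ lies under the leftmost cell of $S$, closing each subcase by a frontier analysis with $L_j$ and $R_j$.
\end{itemize}
Ruling out cells below $S$ is precisely the hexomino-specific content. The pentomino~F shortcut, where one extra cell next to $X,Y$ immediately completes a copy of $Q$, does not supply this for a six-cell stain.
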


\begin{figure}[ht]
  \centering
  \begin{minipage}{.40\textwidth}
    \centering
    \includegraphics[width=.9\linewidth]{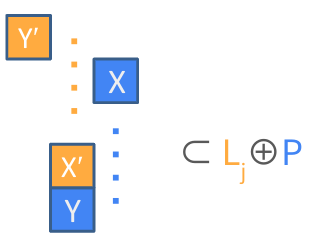}
    \captionof{figure}{}
    \label{fig:63setup1}
  \end{minipage}
  \begin{minipage}{.40\textwidth}
    \centering
    \includegraphics[width=.9\linewidth]{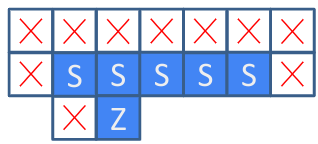}
    \captionof{figure}{}
    \label{fig:63setup2}
  \end{minipage}
\end{figure}

\begin{figure}[ht]
  \centering
  \begin{minipage}{.15\textwidth}
    \centering
    \includegraphics[width=.9\linewidth]{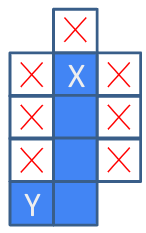}
    \captionof{figure}{}
    \label{fig:63contra1}
  \end{minipage}
  \begin{minipage}{.15\textwidth}
    \centering
    \includegraphics[width=.9\linewidth]{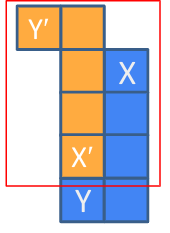}
    \captionof{figure}{}
    \label{fig:63contra2}
  \end{minipage}
  \begin{minipage}{.22\textwidth}
    \centering
    \includegraphics[width=.9\linewidth]{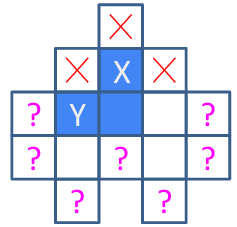}
    \captionof{figure}{}
    \label{fig:63contra3}
  \end{minipage}
  \begin{minipage}{.19\textwidth}
    \centering
    \includegraphics[width=.9\linewidth]{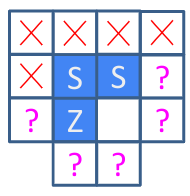}
    \captionof{figure}{}
    \label{fig:63contra4}
  \end{minipage}
  \begin{minipage}{.24\textwidth}
    \centering
    \includegraphics[width=.9\linewidth]{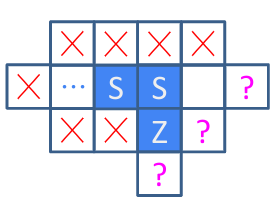}
    \captionof{figure}{}
    \label{fig:63contra5}
  \end{minipage}
\end{figure}

\begin{proof}
  Let $Q$ be a hexomino~3 and assume that $Q$ cannot be covered by a sufficiently large polyomino $P$. Let $L_0$ (resp. $R_0$) be the polyomino obtained by reflecting $P$ about its top side and shifting it to the left (resp. right) by one square, and let $L_i, R_i$ be polyominoes obtained by shifting $L_0, R_0$ down by $i$ squares, respectively. Take the smallest $j$ such that $L_{j+1}\cap P$ is non-empty and the smallest $k$ such that $R_{k+1}\cap P$ is non-empty. From symmetry, we may assume that $j\le k$. That is, none of $L_0, \cdots L_j, R_0, \cdots, R_j$ overlaps $P$. From the fact that $L_{j+1}$ overlaps $P$, we can take $X'\in L_j$ and $Y\in P$ such that $Y$ is one square below $X'$. Let $X$ be the cell in $P$ corresponding to $X'$ (Figure~\ref{fig:63setup1}). In the following, we divide cases according to the difference between the row index of $X$ and the row index of $Y$ in $P$.

  \begin{itemize}
    \item If the row index of $X$ is smaller than the row index of $Y$, $P$ does not contain the squares marked by $\times$ in Figure~\ref{fig:63contra1}, since $L_j$ does not contain the square under $X'$, and $L_0,\cdots, L_j, R_0, \cdots, R_j$ do not overlap $P$. Therefore, by connectedness, $P$ contains all squares below $X$ up to the right of $Y$. If the row index of $X$ is smaller than the row index of $Y - 1$, $L_j\oplus P$ includes $Q$ (Figure~\ref{fig:63contra2}), thus it must hold that the row index of $X$ equals the row index of $Y - 1$. In this case, there is a cell at one of the positions marked by \Qpink in Figure~\ref{fig:63contra3} that connects to $X$ without passing through any other \Qpink, but in any case, $L_j\oplus P$ or $R_j\oplus P$ includes $Q$, which is a contradiction.
    \item If the row index of $X$ is greater than the row index of $Y$, by reversing up and down and reversing the roles of $L_j$ and $P$, a contradiction can be derived in the same way.
    \item If the row index of $X$ equals the row index of $Y$, let $S$ be the set of cells that lie consecutively to the left and right of $Y$. Since $L_j$ and $R_j$ do not overlap $P$, $P$ does not contain any square that is the upper left or upper right of a cell in $S$. Since $P$ is not of height 1, it contains at least one square below $S$. Let $Z$ be the leftmost of them (Figure~\ref{fig:63setup2}). If $Z$ is adjacent to the leftmost cell in $S$, then there is a cell in one of the positions marked by \Qpink in Figure~\ref{fig:63contra4} and connects to $Z$ without passing through any other \Qpink, but in any case, $L_j\oplus P$ or $R_j\oplus P$ includes $Q$, which is a contradiction. Thus, $Z$ is adjacent to a non-leftmost cell of $S$. In this case, there is a cell in one of the positions marked by \Qpink in Figure~\ref{fig:63contra5} that connects to $Z$ without passing through any other \Qpink, but in any case, $L_j\oplus P$ or $R_j\oplus P$ includes $Q$, which is a contradiction.
  \end{itemize}
\end{proof}

\begin{lemma}\label{lem:cover6N}
  Hexomino~N (Figure~\ref{fig:stain6N}) is always-coverable.
\end{lemma}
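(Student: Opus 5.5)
We will follow the same template used for pentomino~F (Lemma~\ref{lem:cover5F}) and hexomino~3 (Lemma~\ref{lem:cover63}). Let $Q$ be a hexomino~N and suppose $Q$ cannot be covered by a sufficiently large polyomino $P$; in particular we assume its width and height exceed a small constant, say 6.

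\textbf{Setup.} Define $L_0$ (resp.\ $R_0$) as the reflection of $P$ about its top side shifted left (resp.\ right) by one square. Define $L_i, R_i$ as these shifted down by $i$. Choose the minimal $j$ with $L_{j+1}\cap P\neq\emptyset$ and the minimal $k$ with $R_{k+1}\cap P\neq\emptyset$, and by left--right symmetry assume $j\le k$. Then none of $L_0,\dots,L_j,R_0,\dots,R_j$ overlaps $P$. Each such pair $L_i\oplus P$ or $R_i\oplus P$ is a flat placement of two stickers, so none of them may include $Q$. Take $X'\in L_j$ and $Y\in P$ with $Y$ directly below $X'$, and let $X\in P$ be the cell corresponding to $X'$. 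We then split into cases by comparing the row indices of $X$ and $Y$.

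\textbf{Case: row of $X$ above row of $Y$.} As in Lemma~\ref{lem:cover63}, the non-overlap of $L_0,\dots,L_j,R_0,\dots,R_j$ with $P$, together with the emptiness of $L_j$ directly below $X'$, forbids a set of squares near $X$ and $Y$. Connectedness of $P$ then forces a vertical run of cells from below $X$ down to the square right of $Y$. The N-hexomino is two offset vertical bars of lengths 3 and 3, or 2 and 4. This run in $P$, combined with the column in $L_j$ through $X'$ (which mirrors it upward), should quickly produce two side-by-side offset bars forming $Q$, unless the row gap is 1. For gap 1, I would enumerate the few squares (the \Qpink\ positions) through which $X$ must connect to the rest of $P$. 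In each subcase, $L_j\oplus P$ or $R_j\oplus P$ includes $Q$.

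\textbf{Case: row of $X$ below row of $Y$.} This follows by the up--down reversal, swapping the roles of $L_j$ and $P$.

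\textbf{Case: equal rows.} This is the main obstacle. Let $S$ be the maximal horizontal run of cells through $Y$ (which also contains $X$'s reflection partner). Non-overlap forces $P$ to have no cells upper-left or upper-right of cells of $S$. Since $P$ has height greater than 1, some cell lies below $S$; let $Z$ be the leftmost one. I would first split on whether $Z$ lies under an end of $S$ or an interior cell, since N is long in one direction and a horizontal run interacts with a reflected copy. In each subcase I would list the possible continuation squares of $Z$ and check that each yields a copy of N, lying horizontally along $S$ and its mirror in $L_j$ or $R_j$. If some continuation survives, the fallback is to propagate the argument along $S$, as in the F proof, showing that no cells lie below $S$ at all. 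That contradicts $P$ having height greater than 1.
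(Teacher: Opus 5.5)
Your proposal is a plan rather than a proof. Each case ends in ``should'', ``I would enumerate'' or a fallback, and the enumerations that would actually exhibit $Q$ are never done. That is the whole difficulty: nothing shows that the family $L_i,R_i$ (reflections in horizontal lines, shifted by $\pm1$) is strong enough to force hexomino~N.

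You also leave the stain unfixed. Offset bars of lengths $3$ and $3$ sharing one row form hexomino ``5'', which is in $\calI$ and is not always-coverable. So an argument that does not depend on which shape $Q$ is cannot be correct.

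For the actual N (a $4$-bar and a $2$-bar in adjacent lines sharing exactly one end cell), your concrete claims fail:
\begin{itemize}
\item \emph{Row of $X$ above row of $Y$.} Write $a$ for the column of $X$, $y_X,y_Y$ for the rows of $X,Y$, and $d=y_X-y_Y>0$. The forced cells of $L_j\oplus P$ are column $a-1$ on rows $[y_Y,y_X+1]$, column $a$ on rows $[y_Y,y_X]$, and the mirror of $Y$ at $(a-2,y_X+1)$. The $2$-bar must stick out beyond an end of the $4$-bar, so these cells contain N only if $d\ge 3$. Hence $d=2$ is just as open as $d=1$.
\item \emph{Equal rows.} You do get $|S|\le 3$ (otherwise $S$ and its shifted mirror already contain N) and nothing above $S$, so cells below $S$ must exist. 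Yet a single such cell produces no copy of N in $L_j\oplus P$ or $R_j\oplus P$; for example, take $S=\{(0,0),(1,0)\}$ with $(0,-1)\in P$. So ``showing that no cells lie below $S$'' is not a fallback, as it was for pentomino~F. It is precisely the missing case analysis.
\end{itemize}

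The paper does not use the top-side template for N. It takes the uppermost line $L$ of slope $+1$ meeting $P$. It then uses reflections of $P$ in $L$ shifted one square left or up, and further shifted along $L$, none of which overlap $P$. It proves the intermediate claim that no two cells of $P\cap L$ are diagonally adjacent. With that claim, and the choice of $X$ as the bottom-leftmost cell of $P\cap L$, it obtains more disjoint copies: further shifted reflections, and the $180^\circ$ rotations of $P$ about the lower-left and upper-right corners of $X$ together with unit shifts of these. A final local check then finds N in one of the unions.

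The diagonal reflection suits N because it turns a vertical run below a cell of $P\cap L$ into a horizontal run, and a right neighbour into an upper one. For example, a cell of $P\cap L$ with a right neighbour and three cells below it already yields N together with the left-shifted reflection.

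To rescue your route you would have to carry the horizontal-family case analysis to completion, and it is not clear that it closes. Otherwise, switch to a family adapted to N, as the paper does.
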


\begin{figure}[ht]
  \centering
  \begin{minipage}{.30\textwidth}
    \centering
    \includegraphics[width=.9\linewidth]{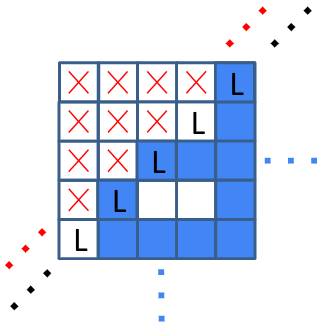}
    \captionof{figure}{}
    \label{fig:6Nsetup1}
  \end{minipage}
  \begin{minipage}{.30\textwidth}
    \centering
    \includegraphics[width=.9\linewidth]{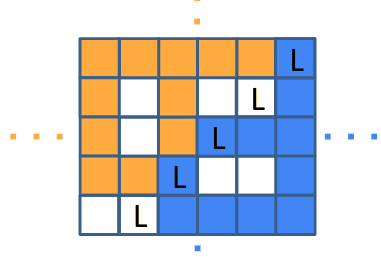}
    \captionof{figure}{}
    \label{fig:6Nsetup2}
  \end{minipage}
  \centering
  \begin{minipage}{.30\textwidth}
    \centering
    \includegraphics[width=.9\linewidth]{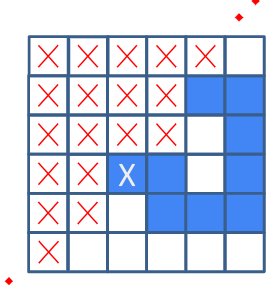}
    \captionof{figure}{}
    \label{fig:6Nsetup3}
  \end{minipage}
\end{figure}

\begin{figure}[ht]
  \centering
  \begin{minipage}{.23\textwidth}
    \centering
    \includegraphics[width=.9\linewidth]{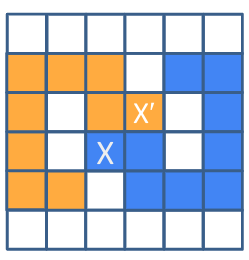}
    \captionof{figure}{}
    \label{fig:6Nsetup4}
  \end{minipage}
  \begin{minipage}{.23\textwidth}
    \centering
    \includegraphics[width=.9\linewidth]{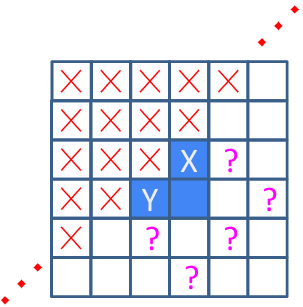}
    \captionof{figure}{}
    \label{fig:6Ncontra1}
  \end{minipage}
  \begin{minipage}{.23\textwidth}
    \centering
    \includegraphics[width=.9\linewidth]{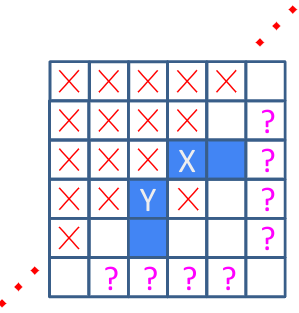}
    \captionof{figure}{}
    \label{fig:6Ncontra2}
  \end{minipage}
  \begin{minipage}{.23\textwidth}
    \centering
    \includegraphics[width=.9\linewidth]{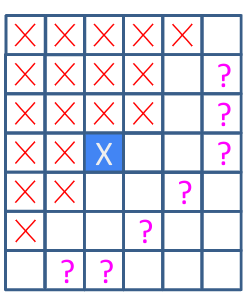}
    \captionof{figure}{}
    \label{fig:6Ncontra3}
  \end{minipage}
\end{figure}

\begin{proof}
  Let $Q$ be a hexomino~N and assume that $Q$ cannot be covered by a sufficiently large polyomino $P$. Let $L\subset\bbR^2$ be the uppermost 45 degree line of the shape of / that has a non-empty intersection with $P$ (Figure~\ref{fig:6Nsetup1}). All the following polyominoes do not overlap $P$.

  \begin{itemize}
    \item Polyomino $P_{\mathrm{L}}$, one which is obtained by reflecting $P$ about $L$ and shifting it by one square left (Figure~\ref{fig:6Nsetup2}).
    \item Polyomino $P_{\mathrm{U}}$, one which is obtained by reflecting $P$ about $L$ and shifting it by one square up.
    \item Polyomino $P_{\mathrm{L2}}$, one which is obtained by shifting $P_{\mathrm{L}}$ by one square left and down.
    \item Polyomino $P_{\mathrm{U2}}$, one which is obtained by shifting $P_{\mathrm{U}}$ by one square right and up.
  \end{itemize}

  We show the following property first.

  \begin{itemize}
    \item[$\star$] There are no two cells in $P\cap L$ that are diagonally consecutive (with Euclidean distance $\sqrt{2}$).
  \end{itemize}

  Assume that there are diagonally consecutive two cells, with $X$ at the upper right and $Y$ at the lower left. We can say that there is no cell below $X$ (Proof: Assume that there is a cell below $X$. There is a cell at one of the positions marked by \Qpink in Figure~\ref{fig:6Ncontra1} and connects to $X$ without passing through any other \Qpink, but in any case, one of $P_{\mathrm{L}}\oplus P, P_{\mathrm{U}}\oplus P, P_{\mathrm{L2}}\oplus P, P_{\mathrm{U2}}\oplus P$ includes $Q$). Therefore, there are cells to the right of $X$ and below $Y$. There is a cell at one of the positions marked by \Qpink in Figure~\ref{fig:6Ncontra2} and connects to $X$ without passing through any other \Qpink, but in any case, one of $P_{\mathrm{L}}\oplus P, P_{\mathrm{U}}\oplus P, P_{\mathrm{L2}}\oplus P, P_{\mathrm{U2}}\oplus P$ includes $Q$, which is a contradiction. Therefore, $\star$ has been proven.

  From $\star$, the following polyominoes also do not overlap with $P$.

  \begin{itemize}
    \item Polyomino $P_{\mathrm{LD}}$, one which is obtained by shifting $P_{\mathrm{L}}$ by one square down.
    \item Polyomino $P_{\mathrm{UR}}$, one which is obtained by shifting $P_{\mathrm{U}}$ by one square right.
  \end{itemize}

  Let $X$ be the bottom leftmost cell in $L\cap P$ (Figure~\ref{fig:6Nsetup3}). The following polyominoes also do not overlap with $P$.

  \begin{itemize}
    \item Polyomino $L_0$, one which is obtained by rotating $P$ 180 degrees around the lower left point of the square $X$.
    \item Polyomino $L_1$, one which is obtained by shifting $L_0$ by one square up.
    \item (From $\star$) Polyomino $R_0$, one which is obtained by rotating $P$ 180 degrees around the upper right point of the square $X$ (Figure~\ref{fig:6Nsetup4}).
    \item Polyomino $R_1$, one which is obtained by shifting $R_0$ by one square right.
  \end{itemize}

  There is a cell in one of the positions marked by \Qpink in Figure~\ref{fig:6Ncontra3} and connects to $X$ without passing through any other \Qpink, but in any case, the union of $P$ and one of the polyominoes mentioned above as a polyomino that does not overlap with $P$ includes $Q$, which is a contradiction.
\end{proof}

\begin{lemma}\label{lem:cover6S}
  Hexomino~S (Figure~\ref{fig:stain6S}) is always-coverable.
\end{lemma}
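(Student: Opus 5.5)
We argue by contradiction, as for the earlier lemmas. Let $Q$ be hexomino~S and suppose a sufficiently large polyomino $P$ cannot cover $Q$; in particular $P$ has width and height larger than $6$. Hexomino~S is an S-shaped hexomino made of two horizontal dominoes/trominoes offset diagonally. Its shape has a natural diagonal direction, much like hexomino~N, so we will reuse the framework of Lemma~\ref{lem:cover6N}. Let $L$ be the uppermost 45-degree line of slope $+1$ (or $-1$, chosen to match the chirality of the S; both reflections are allowed, so either works) that meets $P$. Reflecting $P$ about $L$ and shifting by one square left or up gives $P_{\mathrm{L}}$ and $P_{\mathrm{U}}$. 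Further shifts diagonally outward give $P_{\mathrm{L2}}$ and $P_{\mathrm{U2}}$. By the extremality of $L$, none of these copies overlaps $P$. Hence for each of them, the union of $P$ and that copy must avoid including $Q$.

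The first key step is the analogue of property $\star$: no two cells of $P\cap L$ are diagonally consecutive. Suppose $X$ (upper right) and $Y$ (lower left) are two such cells. We split on whether there is a cell below $X$ or to the right of $X$. In each case we use connectivity to find a cell at one of a few \Qpink{} positions that connects to $X$ or $Y$ without passing through another \Qpink{} position. For every such position we exhibit a copy of $Q$ that straddles $L$, with part in $P$ and part in the mirrored copy ($P_{\mathrm{L}}$, $P_{\mathrm{U}}$, $P_{\mathrm{L2}}$ or $P_{\mathrm{U2}}$). Near $L$ the mirrored copy contains the reflections of $X$, $Y$ and their neighbours, so an S-shape can be assembled across the line. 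Once $\star$ holds, the one-step translates $P_{\mathrm{LD}}$ and $P_{\mathrm{UR}}$ also avoid $P$, which adds more disjoint copies.

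Next, let $X$ be the bottom-leftmost cell of $L\cap P$. We add the two $180^\circ$ rotations of $P$ about the lower-left and upper-right corners of $X$, with their one-square shifts $L_0, L_1, R_0, R_1$. Disjointness of $L_0$ and $L_1$ follows from the choice of $X$ and of $L$, and disjointness of $R_0$ and $R_1$ follows from $\star$. Now consider how $X$ connects to the rest of $P$ (which must happen, since $P$ is large). Its neighbours can only lie below or to the right of it. Follow the cells of $P$ out to distance about $3$ from $X$ and enumerate the \Qpink{} positions of a first connecting cell. Because hexomino~S is centrally symmetric, the union of $P$ with a $180^\circ$-rotated copy around a corner of $X$ is especially effective. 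A two-cell path leaving $X$ in $P$, together with its rotated image, already forms an S-like zigzag. So we expect each case to produce $Q$ inside $P\oplus L_i$ or $P\oplus R_i$, or inside one of the reflected copies.

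The main obstacle is the case analysis in the last step, and possibly in $\star$. Hexomino~S has six cells, so a single two-sheet union needs three cells of $P$ on each side in a specific zigzag, and some local configurations (for example $X$ continuing straight down in a thin column) may not immediately yield $Q$. If the diagonal-line framework fails there, I would switch to the horizontal framework of Lemma~\ref{lem:cover63}. That framework uses reflections $L_i, R_i$ about the top side with the minimal sliding index $j\le k$ and the same three-way split on the row indices of $X$ and $Y$. The S-shape, being two offset horizontal bars, should interlock naturally across a horizontal reflection line, and the equal-row case reduces to the top row having height-$1$ runs with forced cells below.
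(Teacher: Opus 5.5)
\textbf{There is a genuine gap: this is a plan, not a proof.} Property $\star$ and the final enumeration around $X$ are where all the content lies, and both are only announced (``we expect each case to produce $Q$''). You also say yourself that some configurations may not close.

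The step you put first, the global $\star$, does not follow from a short local check. Put the lower-left cell $Y$ at $(0,0)$ and $X=(1,1)$ on $L$, so that $P$ lies in $y\le x$, and suppose $(1,0)\notin P$. Connectivity then forces $(0,-1)\in P$ and $(2,1)\in P$. Now $P\oplus P_{\mathrm{L}}$ contains $(-2,0),(-1,0),(0,0)$ and $(0,1),(1,1),(2,1)$. These are two rows of three offset by \emph{two}, which is not hexomino~S. The images of these four cells in $P_{\mathrm{U}}$, $P_{\mathrm{L2}}$ and $P_{\mathrm{U2}}$ do not complete $Q$ either. So this zigzag case needs a deeper analysis that you have not supplied, and you give no argument that it closes with the reflected copies.

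\textbf{The missing idea} is that $\star$ is needed at only one place. There you can get it after first using the copy that is disjoint for free.

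Let $X$ be the leftmost cell of $L\cap P$. Then $L_0$ (rotation about the lower-left corner of $X$) misses $P$ by the choice of $X$ alone.

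Neighbours of $X$ can only lie to its right or below it. Suppose none lay to the right. Then $X$, the cell below it, and the next cell (below or right of that one), together with their images in $L_0$, would form $Q$. So the square $Y$ to the right of $X$ is in $P$.

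Next, the square above $Y$ (the diagonal neighbour of $X$ on $L$) is not in $P$. Otherwise $X$, $Y$ and that square form an L-tromino, which is the easy case of your $\star$: any further cell of $P$ adjacent to it, together with its image in $P_{\mathrm{L}}$ (or $P_{\mathrm{U}}$), gives $Q$.

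That single fact is exactly what makes $R_0$ (rotation about the upper-right corner of $X$) disjoint from $P$. At least one of the squares right of $Y$, below $Y$, or below $X$ lies in $P$. Whichever it is, it forms $Q$ together with $X$, $Y$ and their images in $R_0$.

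This is the paper's proof. Your observation that the central symmetry of $Q$ makes $180^\circ$ rotations about corners of $X$ effective is precisely its engine. It needs neither the global $\star$, nor $L_1$, $R_1$, $P_{\mathrm{L2}}$, $P_{\mathrm{U2}}$, $P_{\mathrm{LD}}$, $P_{\mathrm{UR}}$, nor the horizontal fallback.
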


\begin{figure}[ht]
  \centering
  \begin{minipage}{.20\textwidth}
    \centering
    \includegraphics[width=.9\linewidth]{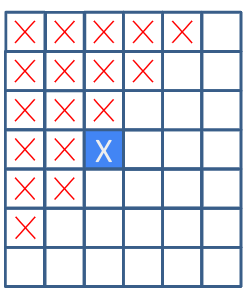}
    \captionof{figure}{}
    \label{fig:6Ssetup1}
  \end{minipage}
  \begin{minipage}{.20\textwidth}
    \centering
    \includegraphics[width=.9\linewidth]{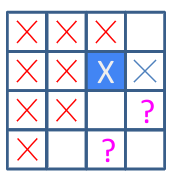}
    \captionof{figure}{}
    \label{fig:6Scontra1}
  \end{minipage}
  \begin{minipage}{.20\textwidth}
    \centering
    \includegraphics[width=.9\linewidth]{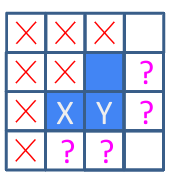}
    \captionof{figure}{}
    \label{fig:6Scontra2}
  \end{minipage}
  \begin{minipage}{.20\textwidth}
    \centering
    \includegraphics[width=.9\linewidth]{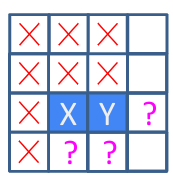}
    \captionof{figure}{}
    \label{fig:6Scontra3}
  \end{minipage}
\end{figure}

\begin{proof}
  Let $Q$ be a hexomino~S, and assume that $Q$ cannot be covered by a sufficiently large polyomino $P$. Let $L\subset\bbR^2$ be the uppermost 45 degree line of the shape of / that has a non-empty intersection with $P$, and let $X$ be the leftmost cell of $L\cap P$ (Figure~\ref{fig:6Ssetup1}). The following polyominoes do not overlap $P$.

  \begin{itemize}
    \item Polyomino $L_0$, one which is obtained by rotating $P$ 180 degrees around the lower left point of the square $X$.
    \item Polyomino $P_{\mathrm{L}}$, one which is obtained by reflecting $P$ about $L$ and shifting it by one square left.
    \item Polyomino $P_{\mathrm{U}}$, one which is obtained by reflecting $P$ about $L$ and shifting it by one square up.
  \end{itemize}

  There is a cell to the right of $X$ (Proof: otherwise, there would be a cell in one of the positions marked by \Qpink in Figure~\ref{fig:6Scontra1} and connects to $X$ without passing through any other \Qpink, but in any case $L_0\oplus P$ includes $Q$, which is a contradiction). Let $Y$ denote the cell. There is no cell above $Y$ (Proof: assuming there is a cell above $Y$, there is a cell in one of the positions marked by \Qpink in Figure~\ref{fig:6Scontra2}, but in any case $P_{\mathrm{L}}\oplus P$ or $P_{\mathrm{U}}\oplus P$ includes $Q$, which is a contradiction). Therefore, the following polyomino also does not overlap with $P$.

  \begin{itemize}
    \item Polyomino $R_0$, one which is obtained by rotating $P$ 180 degrees around the upper right point of the square $X$.
  \end{itemize}

  There is a cell in one of the positions marked by \Qpink in Figure~\ref{fig:6Scontra3}, but in any case $R_0\oplus P$ includes $Q$, which is a contradiction.
\end{proof}

\section{Polyominoes for Proof of Theorem~\ref{thm:uncoverable}}\label{sec:uncoverable}

The table below shows each polyomino in $\mathcal{I}$ on the left, and a polyomino that cannot cover it (counterexample) on the right: the two numbers in the first row indicate the height and width of the polyomino, and \verb|#| indicates the cells of the polyomino.

The counterexamples in this appendix are the smallest we found under our search pipeline and constraints; we do not prove that no smaller counterexample exists.

\begin{center}
  \begin{longtable}[!t]{ | p{3em} | p{39em}  | }
    \hline
    % (verbatiminput) \datadir problem/5/I.yes
\begin{verbatim}
1 5
#####
\end{verbatim} & \tiny% (verbatiminput) \datadir solution/5/I.txt
\begin{verbatim}
33 33
##...##.....####..####.....##...#
.##...##...##..####..##...##...##
..##...##.......#........##...##.
...##...##......#.......##...##..
#...##...##.....##.....##...##...
##...##...##.....#....##...##...#
.##...##...##....#...##...##...##
..##...##...##..##..##...##...##.
...##...##...##.#..##...##...##..
....##...##...#.#..#...##...##...
.#...##...##..###.##..##...##....
##....##...##..#..#..##...##...#.
#......##...####..####...##....##
#.......###.#..####.#...##......#
##........###.......#.###.......#
.###.####..#.......####........##
.#.###..####.......#..####..####.
##........#.......##.....####..#.
#.......###.......#..###.......##
#......##....####.####.##.......#
##....##...###..###.#...##......#
.#...##...##.##..#..##...##....##
....##...##...#.###..##...##...#.
...##...##...##.#.#...##...##....
..##...##...##..#.##...##...##...
.##...##...##..##..##...##...##..
##...##...##...#....##...##...##.
#...##...##....#.....##...##...##
...##...##.....##.....##...##...#
..##...##.......#......##...##...
.##...##........#.......##...##..
##...##...##..####..##...##...##.
#...##.....####..####.....##...##
\end{verbatim} \\
    \hline
    % (verbatiminput) \datadir problem/5/U.yes
\begin{verbatim}
2 3
###
#.#
\end{verbatim} & \tiny% (verbatiminput) \datadir solution/5/U.txt
\begin{verbatim}
71 71
.................................##.##.................................
..............................##..#.#..##..............................
...........................##..#..#.#..#..##...........................
...........................#...#..#.#..#...#...........................
...........................#...####.#####..#...........................
.......................###.#####..#..#..####.###.......................
.........................#..#.....#..#....#..#.........................
.....................##..#..#.....####....#..#..##.....................
...................#..#######.............#######..#...................
...................#..#.........................#..#...................
...................####.........................####...................
................##.#...............................#.##................
................#..#...............................#..#................
................#..#...............................#..#................
................####...............................####................
................#.....................................#................
...........######.....................................######...........
...........#..#.........................................#..#...........
..............#.........................................#..............
........#######.........................................#######........
..........#.................................................#..........
.......#..#.................................................#..#.......
.......####.................................................####.......
.....#..#.....................................................#..#.....
.....#..#.....................................................#..#.....
.....####.....................................................####.....
........#.....................................................#........
..####..#.....................................................#..####..
..#..####.....................................................####..#..
......#..........................................................#.....
.#....#..........................................................#...#.
.###..#..........................................................#####.
...####...........................................................#....
#..#..#...........................................................#...#
####..#.........................................................#######
......#.........................................................#......
#######.........................................................#..####
#..#............................................................####..#
...#............................................................#..#...
.#####..........................................................#..###.
.#...#..........................................................#....#.
.....#..........................................................#......
..#..####.....................................................####..#..
..####..#.....................................................#..####..
........#.....................................................#........
.....####.....................................................####.....
.....#..#.....................................................#..#.....
.....#..#.....................................................#..#.....
.......####.................................................####.......
.......#..#.................................................#..#.......
..........#.................................................#..........
........#######.........................................#######........
..............#.........................................#..............
...........#..#.........................................#..#...........
...........######.....................................######...........
................#.....................................#................
................####...............................####................
................#..#...............................#..#................
................#..#...............................#..#................
................##.#...............................#.##................
...................####.........................####...................
...................#..#.........................#..#...................
...................#..#######.............#######..#...................
.....................##..#..#.............#..#..##.....................
.........................#..#.............#..#.........................
.......................###.#####.......#####.###.......................
...........................#...#.......#...#...........................
...........................#...####.####...#...........................
...........................##..#..#.#..#..##...........................
..............................##..#.#..##..............................
.................................##.##.................................
\end{verbatim} \\
    \hline
    % (verbatiminput) \datadir problem/5/V.yes
\begin{verbatim}
3 3
###
#..
#..
\end{verbatim} &
    \scalebox{0.15}{\parbox{\textwidth}{\verbatiminput{\datadir solution/5/V.txt}}} \\
    \hline
    % (verbatiminput) \datadir problem/5/X.yes
\begin{verbatim}
3 3
.#.
###
.#.
\end{verbatim} & \tiny% (verbatiminput) \datadir solution/5/X.txt
\begin{verbatim}
9 9
...#.#...
..##.##..
#..#.#..#
#########
....#....
#########
#..#.#..#
..##.##..
...#.#...
\end{verbatim} \\
    \hline
    \verbatiminput{\datadir problem/5/Z.yes} & \tiny\verbatiminput{\datadir solution/5/Z.txt} \\
    \hline
    % (verbatiminput) \datadir problem/6/5.yes
\begin{verbatim}
5 2
#.
#.
##
.#
.#
\end{verbatim} &
    \scalebox{0.15}{\parbox{\textwidth}{\verbatiminput{\datadir solution/6/5.txt}}} \\
    \hline
    % (verbatiminput) \datadir problem/6/8.yes
\begin{verbatim}
4 3
.#.
.##
##.
.#.
\end{verbatim} & \tiny% (verbatiminput) \datadir solution/6/8.txt
\begin{verbatim}
17 17
.....#..#..#.....
.....#..#..#.....
....###.#.###....
......#####......
..#.....#.....#..
###.....#.....###
..##....#....##..
...#....#....#...
#################
...#....#....#...
..##....#....##..
###.....#.....###
..#.....#.....#..
......#####......
....###.#.###....
.....#..#..#.....
.....#..#..#.....
\end{verbatim} \\
    \hline
    % (verbatiminput) \datadir problem/6/9.yes
\begin{verbatim}
4 3
..#
###
.#.
.#.
\end{verbatim} & \tiny% (verbatiminput) \datadir solution/6/9.txt
\begin{verbatim}
21 21
.......##.#.##.......
........#.#.#........
......#.#.#.#.#......
.....###########.....
..........#..........
...#......#......#...
..##......#......##..
#..#......#......#..#
####......#......####
...#......#......#...
#####################
...#......#......#...
####......#......####
#..#......#......#..#
..##......#......##..
...#......#......#...
..........#..........
.....###########.....
......#.#.#.#.#......
........#.#.#........
.......##.#.##.......
\end{verbatim} \\
    \hline
    % (verbatiminput) \datadir problem/6/B.yes
\begin{verbatim}
3 4
##..
.###
..#.
\end{verbatim} & \tiny% (verbatiminput) \datadir solution/6/B.txt
\begin{verbatim}
21 21
.......#.....#.......
.......#.....#.......
.....###########.....
..........#..........
.....###########.....
..#.##.........##.#..
..#..#.........#..#..
###..#.........#..###
..#..#.........#..#..
..#..#.........#..#..
..####.........####..
..#............#..#..
..#............#..#..
###..#.........#..###
..#..#.........#..#..
..#.##.........##.#..
.....###########.....
..........#..........
.....###########.....
.......#.....#.......
.......#.....#.......
\end{verbatim} \\
    \hline
    % (verbatiminput) \datadir problem/6/C.yes
\begin{verbatim}
2 4
#..#
####
\end{verbatim} & \tiny% (verbatiminput) \datadir solution/6/C.txt
\begin{verbatim}
27 27
.........##.#.#.##.........
..........#.#.#.#..........
..........#.#.#.#..........
.......##.#.#.#.#.#........
........#.#######.#........
........#.#.....#.#........
........###.....###........
........#........#.....#...
...####.#........#######...
#.....###...........#.....#
#######.............#######
....#.................#....
#####.................#####
....#.................#....
#####.................#####
....#.................#....
#####...............#######
#.................###.....#
...#######........#.####...
...#.....#........#........
........###.....###........
........#.#.....#.#........
........#.#######.#........
........#.#.#.#.#.##.......
..........#.#.#.#..........
..........#.#.#.#..........
.........##.#.#.##.........
\end{verbatim} \\
    \hline
    % (verbatiminput) \datadir problem/6/D.yes
\begin{verbatim}
2 4
.##.
####
\end{verbatim} & \tiny% (verbatiminput) \datadir solution/6/D.txt
\begin{verbatim}
23 23
...........#...........
........##.#.##........
.........#.#.#.........
.....###.#####.###.....
.......###...###.......
...#...........#...#...
...#...........#...#...
...##..........#####...
.#..#.............#..#.
.####.............####.
...#...............#...
####...............####
...#...............#...
.####.............####.
.#..#.............#..#.
...#####.......#####...
...#...#.......#...#...
...#...#.......#...#...
.......###...###.......
.....###.#####.###.....
.........#.#.#.........
........##.#.##........
...........#...........
\end{verbatim} \\
    \hline
    % (verbatiminput) \datadir problem/6/F.yes
\begin{verbatim}
4 3
.##
##.
.#.
.#.
\end{verbatim} & \tiny% (verbatiminput) \datadir solution/6/F.txt
\begin{verbatim}
41 41
........#..#.....#.....#.....#..#........
.......#########.###.###.#########.......
...............#...#.#...#.........#.....
.....#############################.##....
...#...#.........................#.#.....
..######.........................###.#...
.....#.............................#.#...
.#.###.............................###.#.
##.#.................................#.##
.#.#.................................#.#.
.#.#.................................#.#.
##.#.................................#.##
.#.#.................................#.#.
.#.#.................................#.#.
.#.#.................................#.#.
.###.................................###.
...#.................................#...
##.#.................................#.##
.#.#.................................#.#.
.###.................................###.
...#.................................#...
.###.................................###.
.#.#.................................#.#.
##.#.................................#.##
...#.................................#...
.###.................................###.
.#.#...................................#.
.#.#...................................#.
.#.#...................................#.
##.#...................................##
.#.#...................................#.
.#.#...................................#.
##.#...................................##
.#.###.................................#.
...#.#...................................
...#.###.........................######..
.....#.#.........................#...#...
....##.#############################.....
.....#.........#...#.#...#...............
.......#########.###.###.#########.......
........#..#.....#.....#.....#..#........
\end{verbatim} \\
    \hline
    % (verbatiminput) \datadir problem/6/P.yes
\begin{verbatim}
4 2
##
##
#.
#.
\end{verbatim} & \tiny% (verbatiminput) \datadir solution/6/P.txt
\begin{verbatim}
61 61
..........................#.##.##.#..........................
.......................##.#..###..#.##.......................
........................###...#...###........................
....................#....#....#....#....#....................
.................##.#####################.##.................
...............#..###.........#.........###..#...............
...............#...#..........#..........#...#...............
...............##.##..........#..........##.##...............
...........###..###...........#...........###..###...........
.............####.............#.............####.............
..............................#..............................
........#.....................#.....................#........
........#.....................#.....................#........
........##....................#....................##........
.........#....................#....................#.........
.....###.#....................#....................#.###.....
.......###....................#....................###.......
....#...#.....................#.....................#...#....
....##.##.....................#.....................##.##....
.....###......................#......................###.....
...###........................#........................###...
....#.........................#.........................#....
....#.........................#.........................#....
.#..#.........................#.........................#..#.
.##.#.........................#.........................#.##.
..###.........................#.........................###..
###.#.........................#.........................#.###
....#.........................#.........................#....
#...#.........................#.........................#...#
##..#.........................#.........................#..##
.###########################################################.
##..#.........................#.........................#..##
#...#.........................#.........................#...#
....#.........................#.........................#....
###.#.........................#.........................#.###
..###.........................#.........................###..
.##.#.........................#.........................#.##.
.#..#.........................#.........................#..#.
....#.........................#.........................#....
....#.........................#.........................#....
...###........................#........................###...
.....###......................#......................###.....
....##.##.....................#.....................##.##....
....#...#.....................#.....................#...#....
.......###....................#....................###.......
.....###.#....................#....................#.###.....
.........#....................#....................#.........
........##....................#....................##........
........#.....................#.....................#........
........#.....................#.....................#........
..............................#..............................
.............####.............#.............####.............
...........###..###...........#...........###..###...........
...............##.##..........#..........##.##...............
...............#...#..........#..........#...#...............
...............#..###.........#.........###..#...............
.................##.#####################.##.................
....................#....#....#....#....#....................
........................###...#...###........................
.......................##.#..###..#.##.......................
..........................#.##.##.#..........................
\end{verbatim} \\
    \hline
    % (verbatiminput) \datadir problem/6/R.yes
\begin{verbatim}
3 3
###
##.
.#.
\end{verbatim} & \tiny% (verbatiminput) \datadir solution/6/R.txt
\begin{verbatim}
31 31
.............##.##.............
..........##.#...#.##..........
...........#########...........
............#.....#............
........#...#.....#...#........
.......######.....######.......
.........#...........#.........
.....#..##...........#...#.....
....##..#............###.##....
.....####..............###.....
.#...#...................#...#.
.##..#...................#..##.
..####...................####..
###.........................###
#.#.........................#.#
..#.........................#..
#.#.........................#.#
###.........................###
..#......................####..
.##......................#..##.
.#.......................#...#.
.....###..............####.....
....##.###............#..##....
.....#...#...........##..#.....
.........#...........#.........
.......######.....######.......
........#...#.....#...#........
............#.....#............
...........#########...........
..........##.#...#.##..........
.............##.##.............
\end{verbatim} \\
    \hline
    % (verbatiminput) \datadir problem/6/T.yes
\begin{verbatim}
4 3
###
.#.
.#.
.#.
\end{verbatim} & \tiny% (verbatiminput) \datadir solution/6/T.txt
\begin{verbatim}
23 23
.........##.##.........
..........###..........
......##...#...##......
.......#.#####.#.......
......##.#...#.##......
.......###...###.......
..#.#.##.......##.#.#..
..#####.........#####..
.....#...........#.....
#..###...........###..#
##.#...............####
.###................#..
##.#...............####
#..###...........###..#
.....#...........#.....
..####..........#####..
..#.#..........##.#.#..
.......###...###.......
......##.#...#.##......
.......#.##.##.#.......
......##..###..##......
..........#.#..........
.........##.##.........
\end{verbatim} \\
    \hline
    % (verbatiminput) \datadir problem/6/W.yes
\begin{verbatim}
4 3
#..
##.
.##
..#
\end{verbatim} & \tiny% (verbatiminput) \datadir solution/6/W.txt
\begin{verbatim}
17 17
......#....#.....
..#...#....#.....
..#...#....#####.
..#...#....#.....
..#...#....#.....
#######....#..#..
..#........######
..#...........#..
..#...........#..
..#...........#..
######........#..
..#..############
.....#....#...#..
.....#....#...#..
.#####....#...#..
.....#....#...#..
.....#....#......
\end{verbatim} \\
    \hline
    \verbatiminput{\datadir problem/6/Z.yes} & \tiny\verbatiminput{\datadir solution/6/Z.txt} \\
    \hline
    % (verbatiminput) \datadir problem/7/1110_0011_0001_0001.yes
\begin{verbatim}
4 4
###.
..##
...#
...#
\end{verbatim} & \tiny% (verbatiminput) \datadir solution/7/1110_0011_0001_0001.txt
\begin{verbatim}
17 17
....##.#.#.##....
...##..###..##...
....##..#..##....
.#...##.#.##...#.
###...#####...###
#.##......#..##.#
...##.....####...
##..#.......#..##
.####.......####.
##..#.......#..##
...####...####...
#.##..#...#..##.#
###...#####...###
.#...##.#.##...#.
....##..#..##....
...##..###..##...
....##.#.#.##....
\end{verbatim} \\
    \hline
  \end{longtable}
\end{center}

\end{document}